\documentclass[
    reprint,
    prxquantum,
    superscriptaddress,
    nofootinbib,
    amsmath,amssymb,
    aps
]{revtex4-2}

\usepackage[colorlinks=true,linkcolor=blue,citecolor=blue,urlcolor=blue]{hyperref}
\usepackage{physics}
\usepackage{amsthm}
\usepackage{enumitem}
\usepackage{here}
\usepackage[legacycolonsymbols]{mathtools}

\newtheorem{thm}{Theorem}
\newtheorem*{thm*}{Theorem}
\newtheorem{cor}{Corollary}
\newtheorem*{cor*}{Corollary}
\newtheorem{dfn}{Definition}
\newtheorem*{dfn*}{Definition}
\newtheorem{lem}{Lemma}
\newtheorem*{lem*}{Lemma}
\newtheorem{prop}{Proposition}

\begin{document}

\title{Information Erasure and Quantum Imprint in Quantum Measurement \\and First-Order SPAM Error Separation}

\author{Taiga Suzuki}%
\affiliation{%
Department of Physics, Institute of Science Tokyo,
Meguro-ku, Tokyo 152-8551, Japan
}%

\author{Yuki Ito}%
\affiliation{%
Graduate School of Engineering Science, The University of Osaka,
1-3 Machikaneyama, Toyonaka, Osaka 560-8531, Japan
}%

\author{Masayuki Ohzeki}%
\affiliation{%
Department of Physics, Institute of Science Tokyo,
Meguro-ku, Tokyo 152-8551, Japan
}%
\affiliation{%
Graduate School of Information Sciences, Tohoku University,
Sendai, Miyagi 980-8579, Japan
}%
\affiliation{%
Research and Education Institute for Semiconductors and Informatics,
Kumamoto University, Kumamoto 860-8555, Japan
}%
\affiliation{%
Sigma-i Co., Ltd., Minato-ku, Tokyo 108-0075, Japan
}%

\date{\today}%
\begin{abstract}
    We introduce information erasure and quantum imprint as two properties that classify quantum instruments.
    Information erasure is the property that an appropriate postselection can render the distribution of
    earlier measurement outcomes independent of the initial quantum state while retaining all outcome
    branches.
    Quantum imprint is the complementary property that no admissible postselection can eliminate this state
    dependence.
    We show that this classification has a nontrivial structure and that the natural intuition that
    measurements providing more information about the initial quantum state should be less likely to exhibit
    information erasure does not hold in general.

    We further show that, under a sufficiently reliable postselection, information erasure enables first-order
    separation of state-preparation and measurement (SPAM) errors.
    Specifically, the first-order contribution of state-preparation error vanishes from the posterior
    distribution, whereas visible first-order contributions of measurement error remain.
    This result recasts SPAM error separation from the problem of simultaneously characterizing state
    preparation and measurement into the problem of realizing a reliable postselection.
\end{abstract}

\maketitle
\section{introduction}
Quantum measurement provides an essential interface between the quantum world and the classical information
accessible to us.
Understanding quantum measurement is therefore important not only for the foundations of quantum mechanics,
but also for essentially all aspects of quantum information processing.

Among the various techniques used in quantum measurement, postselection—in which only those experimental
trials yielding a specified outcome in a subsequent measurement are retained—has been widely studied as a
means of modifying the statistics of an earlier measurement~\cite{AharonovBergmannLebowitz1964,AharonovVaidman1991,AharonovPopescuTollaksen2010}.
Its effects have been investigated in a variety of contexts, including weak values, postselected inference,
and the general structure of sequential quantum measurements~\cite{AharonovAlbertVaidman1988,KofmanAshhabNori2012,Dressel2014WeakValues,Fritz2010,DresselJordan2012,DresselJordan2013Instruments,Silva2014PrePostselected,Kiktenko2023Postselection,PinskeMolmer2025Retrodiction}.

A prominent use of postselection is to enhance the sensitivity of observed statistics to a particular physical
quantity, as exemplified by weak-value amplification, Fisher-information concentration, and postselected
metrological protocols~\cite{Dixon2009,Gendra2013Abstention,Jordan2014TechnicalAdvantages,ArvidssonShukur2020QuantumAdvantage}.
Here, we focus on the converse possibility: can postselection completely eliminate the sensitivity of an
earlier measurement outcome to the initial quantum state?

In this work, we show that the answer to this question is sharply divided into affirmative and negative cases
according to the structure of the quantum instrument.
For one class of quantum instruments, an appropriate postselection can completely eliminate the dependence of
the posterior distribution of the earlier measurement outcome on the initial quantum state.
For another class, by contrast, no postselection can completely eliminate this state dependence while
retaining all branches of the earlier measurement outcomes.
We refer to the former property as \emph{information erasure} and the latter as \emph{quantum imprint}, and
classify quantum instruments according to these two properties.
Fig.~\ref{instrument_partition} schematically illustrates this classification.

Information erasure and quantum imprint may, at first sight, appear to be closely related to how much
information the first measurement acquires about the quantum state $\rho$.
We show that this natural intuition fails in two respects.
First, even two quantum instruments implementing exactly the same POVM can exhibit opposite behaviors, with
one admitting information erasure and the other exhibiting quantum imprint.
Second, information erasure is compatible with informational completeness.
That is, even a quantum instrument that induces information erasure can implement an informationally complete
POVM before postselection.
Therefore, information erasure does not arise simply because the first measurement acquires little information
about the initial state.

Beyond this structural result, we show that information erasure has a direct operational consequence in the
presence of experimental errors.
We introduce unknown small perturbations in both an instrument exhibiting information erasure and the input
quantum state, and expand the posterior distribution to first order.
The first-order contribution of state-preparation error then cancels exactly, while visible first-order
measurement errors of the instrument remain.
The difficulty of distinguishing state-preparation and measurement (SPAM) errors is a central problem in
quantum-device characterization, and various approaches have been proposed to address this intrinsic ambiguity~\cite{Nielsen2021GST,Cattaneo2023SelfConsistentQMT,Jayakumar2024Universal,Lin2021IndependentSPAM,Yu2025SeparateSPAM,Stark2014,JacksonVanEnk2015}.

Our approach enables first-order SPAM error separation under two assumptions: the instrument exhibits
information erasure, and the postselection can be implemented with sufficient reliability, meaning that its
imperfections do not affect the posterior distribution at first order.
First, as follows from one of our main results, for some POVMs, the same POVM admits instrument realizations
exhibiting either information erasure or quantum imprint, and instruments exhibiting information erasure can
also implement informationally complete POVMs.
Thus, requiring information erasure does not restrict our approach to special measurements with limited
information-acquisition capability.
Second, sufficiently reliable postselection allows the SPAM error separation problem to be reframed.
Instead of simultaneously characterizing state preparation and measurement with high precision, one need only
realize a sufficiently reliable postselection.

These results show that state dependence in quantum measurement not only characterizes a nontrivial structure
of measurement processes, but can also serve as an experimentally exploitable resource.

The remainder of this paper is organized as follows.
In Sec.~II, we introduce quantum instruments and postselection, define information erasure and quantum
imprint, and present important classes exhibiting these properties together with their structural
characterization.
In Sec.~III, we discuss first-order SPAM error separation based on information erasure and characterize the
observable components of measurement error.
Finally, in Sec.~IV, we summarize our results and discuss future directions.

\begin{figure}[h]
    \includegraphics[width=6cm]{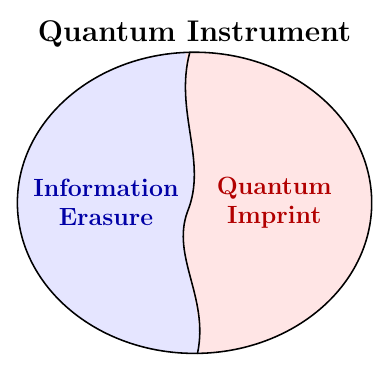}
    \caption{Schematic classification of quantum instruments according to information erasure and quantum
    imprint.
    The curved boundary is schematic and emphasizes the nontrivial structure of this classification.}
    \label{instrument_partition}
\end{figure}

\section{Information Erasure and Quantum Imprint}
In this section, we introduce the concepts of information erasure and quantum imprint, identify several
important classes of quantum instruments exhibiting each property, and establish structural results showing
that this dichotomy is nontrivial and can lead to counterintuitive behavior.

We first introduce the sequential measurement setting in which these two concepts are defined.
Throughout, $\mathcal{H}$ denotes a finite-dimensional Hilbert space, $X$ and $Y$ denote finite outcome sets,
and $\mathcal{S}(\mathcal{H})$ denotes the set of quantum states on $\mathcal{H}$.
As illustrated in Fig.~\ref{fig_diagram}, an input state $\rho\in\mathcal{S}(\mathcal{H})$ is first measured
by a quantum instrument~\cite{Ozawa1984,Ozawa1985}
\begin{align}
    \mathcal{I} = \{\mathcal{I}_x\}_{x\in X},
\end{align}
where each $\mathcal{I}_x$ is completely positive and $\sum_{x\in X}\mathcal{I}_x$ is trace preserving.
The probability of obtaining the first outcome $x$ is
\begin{align}
    p(x|\rho) = \operatorname{Tr}\!\left[\mathcal{I}_x(\rho)\right].
\end{align}

A second measurement, described by a POVM $\{P_y\}_{y\in Y}$, is subsequently performed.
The joint probability of the two outcomes is
\begin{align}
    p(x,y|\rho) = \operatorname{Tr}\!\left[P_y\,\mathcal{I}_x(\rho)\right] = \operatorname{Tr}\!\left[\rho\,\mathcal{I}_x^\dagger(P_y)\right].
\end{align}
$\mathcal{I}_x^\dagger$ denotes the dual map of $\mathcal{I}_x$.
We postselect the runs in which the second measurement yields a fixed outcome $y^\ast\in Y$.
For any state and measurements satisfying
${\operatorname{Tr}\!\left[\rho\sum_{z\in X}\mathcal{I}_z^\dagger(P_{y^\ast})\right]}>0$, the conditional
distribution of the earlier outcome can be calculated in the same way as~\cite{DresselJordan2012,Gammelmark2013}:
\begin{align}
    p(x|y^\ast,\rho) = \frac{p(x,y^\ast|\rho)}{\sum_{x\in X}p(x,y^\ast|\rho)} = \frac{\operatorname{Tr}\!\left[\rho\,\mathcal{I}_x^\dagger(P_{y^\ast})\right]}{\operatorname{Tr}\!\left[\rho\sum_{z\in X}\mathcal{I}_z^\dagger(P_{y^\ast})\right]}.
    \label{eq:postselected_distribution}
\end{align}
We refer to $p(x|y^\ast,\rho)$ as the posterior distribution of the earlier measurement outcome.
Our starting point is to ask how postselection can transform this posterior distribution with respect to the
input quantum state.
As we show in the next subsection, this question leads to a dichotomy between two classes of instruments.

\subsection{Conditions for Information Erasure and Quantum Imprint}
\label{subsec:information_erasure_quantum_imprint}
Using the sequential measurement setting introduced above, we now define information erasure and quantum
imprint.

\begin{figure*}[t]
    \centering
    \includegraphics[width=0.8\textwidth]{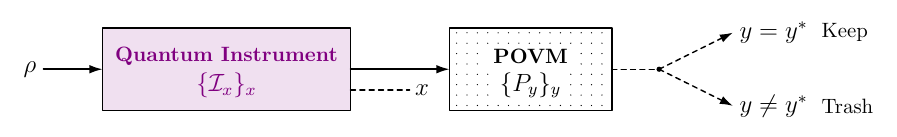}
    \caption{Schematic of the sequential measurement and postselection protocol.
    An input state $\rho$ is first measured by a quantum instrument $\{\mathcal{I}_x\}_{x\in X}$, yielding
    outcome $x$, followed by a POVM $\{P_y\}_{y\in Y}$.
    Only the runs yielding the outcome $y=y^\ast$ are retained.
    The central question is whether this postselection can eliminate the dependence of the earlier outcome
    distribution on the input state $\rho$ while retaining all first measurement outcome branches.}
    \label{fig_diagram}
\end{figure*}
Let $\Delta_X^\circ:=\{q=(q_x)_{x\in X}| q_x>0\ \forall x\in X,\sum_{x\in X}q_x=1\}$ denote the set of
probability distributions with full support on $X$.

Furthermore, to restrict attention to physically possible postselections, we call a POVM effect $P$ admissible
for the instrument $\mathcal{I}=\{\mathcal{I}_x\}_{x\in X}$ if the corresponding postselection outcome occurs
with nonzero probability for at least one input state.

We denote the set of all such admissible postselection effects by

\begin{align}
    \operatorname{Adm}(\mathcal{I}) := \left\{P\in\mathcal{L}(\mathcal{H}) \,\middle|\, 0\leq P\leq I,\ \sum_{z\in X}\mathcal{I}_z^\dagger(P)\neq 0\right\}.
    \label{physical_realization}
\end{align}
We therefore define information erasure as follows:
\begin{dfn}[Information-erasure-inducing instrument]
    \label{dfn:information_erasure_inducing_instrument}
    A quantum instrument $\mathcal{I}=\{\mathcal{I}_x\}_{x\in X}$ is called an
    \emph{information-erasure-inducing instrument} if
    \begin{equation}
        \begin{aligned}
            \exists\,P_{y^\ast}\in \operatorname{Adm}(\mathcal{I}),\ \exists\,q\in\Delta_X^\circ\quad \text{such that} \\
            \forall x\in X,\quad \mathcal{I}_x^\dagger(P_{y^\ast}) = q_x\sum_{z\in X}\mathcal{I}_z^\dagger(P_{y^\ast})
        \end{aligned}
        \label{eq:information_erasure_operator}
    \end{equation}
    When Eq.~\eqref{eq:information_erasure_operator} holds, the pair $(\mathcal{I},P_{y^\ast})$ is said to
    realize \emph{information erasure}.
\end{dfn}
To connect this operator definition with the probability distribution, consider a quantum state
$\rho\in \mathcal{S}(\mathcal{H})$, suppose that $p(x|\rho)>0$ and depends on $\rho$.
If $(\mathcal{I},P_{y^\ast})$ realizes information erasure, then
\begin{align}
    p(x| y^\ast) = q_x
\end{align}
holds and since $q_x$ is independent of $\rho$, $p(x| y^\ast)$ no longer depends on $\rho$.
The requirement $q_x>0$ for every $x\in X$ excludes branch elimination, ensuring that state independence is
realized among all outcomes of the first measurement rather than by postselection discarding some of the
branches entirely.

Moreover, information erasure should also be distinguished from measurement reversal or quantum uncollapsing~\cite{UedaImotoNagaoka1996,KoashiUeda1999,KorotkovJordan2006,Katz2008Reversal,Kim2009Reversal,JordanKorotkov2010}.
In measurement-reversal protocols, a recovery operation is typically designed to undo the state disturbance
associated with a particular measurement outcome and may therefore depend explicitly on that outcome.
By contrast, in our setting the postselection effect $P_{y^\ast}$ is fixed independently of the earlier
outcome $x$, and the objective is not to restore the premeasurement state.

Interestingly, one may also consider the complementary scenario: are there quantum instruments for which no
choice of postselection can erase the state dependence of the earlier measurement record?
The answer is yes, and we show that such instruments indeed exist.
For these instruments, no admissible postselection can make the earlier measurement record state independent
while retaining full support on $X$.

We refer to this as \emph{quantum imprint} as defined below.
\begin{dfn}[Quantum imprint]
    \label{dfn:quantum_imprint}
    A quantum instrument $\mathcal{I}=\{\mathcal{I}_x\}_{x\in X}$ is called a \emph{quantum imprint} if
    \begin{equation}
        \begin{aligned}
            \forall\,P_{y^\ast}\in \operatorname{Adm}(\mathcal{I}),\ \forall\,q\in\Delta_X^\circ \\
            \exists\,x\in X\quad \text{such that}\quad\mathcal{I}_x^\dagger(P_{y^\ast})\neq q_x\sum_{z\in X}\mathcal{I}_z^\dagger(P_{y^\ast}).
        \end{aligned}
        \label{eq:quantum_imprint_operator}
    \end{equation}
\end{dfn}
The probabilistic meaning is that, for any postselection $P_{y^\ast}\in \operatorname{Adm}(\mathcal{I})$,
there exists no $q\in\Delta_X^\circ$ such that $p(x|y^\ast)=q_x$ for all input states $\rho$ and all $x\in X$.

Operationally, the term ``quantum imprint'' reflects the fact that the state dependence left in the earlier
measurement record cannot be erased by any admissible postselection while retaining all outcome branches.

These two definitions provide a mutually exclusive and exhaustive classification of quantum instruments.
Indeed, Eq.~\eqref{eq:quantum_imprint_operator} is precisely the logical negation of
Eq.~\eqref{eq:information_erasure_operator}.
Thus, every quantum instrument belongs to exactly one of the two classes.

In the next subsection, we characterize representative classes of quantum instruments exhibiting information
erasure and quantum imprint and clarify the structure underlying this classification.
\subsection{Structural Results}
\label{sec:structural_results}

We now present representative classes of instruments that realize information erasure and quantum imprint.
Throughout this section, for simplicity, we restrict our attention to single-Kraus instruments of the form
\begin{align}
    \mathcal{I}_x(\cdot) = M_x(\cdot)M_x^\dagger,\qquad\sum_{x\in X}M_x^\dagger M_x = I.
\end{align}
We call such an instrument regular if $M_x$ is invertible for every $x\in X$.

For later convenience, we first give an equivalent characterization of the information-erasure condition.

\begin{lem}[Reference-branch characterization of  information erasure]
    \label{lem:reference_branch_condition}

    A single-Kraus instrument $\{M_x\}_{x\in X}$ is an information-erasure-inducing instrument if and only if
    there exist $P_{y^\ast}\in\operatorname{Adm}(\mathcal{I})$ and, for some (equivalently, any) reference
    outcome $x^\ast\in X$, positive numbers $\{r_x\}_{x\in X}$ with $r_{x^\ast}=1$ such that
    \begin{align}
        M_x^\dagger P_{y^\ast}M_x = r_x M_{x^\ast}^\dagger P_{y^\ast}M_{x^\ast},\qquad\forall x\in X.
        \label{eq:reference_branch_condition}
    \end{align}
    In this case, the corresponding posterior distribution is
    \begin{align}
        p(x|y^\ast) = \frac{r_x}{\sum_{z\in X}r_z}.
        \label{eq:q_from_reference_ratio}
    \end{align}
\end{lem}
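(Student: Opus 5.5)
The plan is to specialize Definition~\ref{dfn:information_erasure_inducing_instrument} to the single-Kraus case, where $\mathcal{I}_x^\dagger(P)=M_x^\dagger P M_x$, and then show that the condition there is equivalent to Eq.~\eqref{eq:reference_branch_condition} through the explicit change of variables $r_x = q_x/q_{x^\ast}$, with inverse $q_x = r_x/\sum_z r_z$. Throughout we write $S:=\sum_{z\in X} M_z^\dagger P_{y^\ast} M_z$. For any $P_{y^\ast}\in\operatorname{Adm}(\mathcal{I})$ we have $S\neq 0$, and this nonvanishing is the one nontrivial ingredient that lets us divide by scalars.

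\emph{Forward direction.} Assume $(\mathcal{I},P_{y^\ast})$ realizes information erasure with some $q\in\Delta_X^\circ$, so that $M_x^\dagger P_{y^\ast}M_x=q_x S$ for every $x$. Fix an arbitrary reference $x^\ast$. Since $q_{x^\ast}>0$, we can write $S = q_{x^\ast}^{-1} M_{x^\ast}^\dagger P_{y^\ast} M_{x^\ast}$. Substituting this gives
\begin{align}
M_x^\dagger P_{y^\ast}M_x = \frac{q_x}{q_{x^\ast}}\, M_{x^\ast}^\dagger P_{y^\ast} M_{x^\ast}.
\end{align}
The numbers $r_x:=q_x/q_{x^\ast}$ are positive and satisfy $r_{x^\ast}=1$, so Eq.~\eqref{eq:reference_branch_condition} holds for every choice of $x^\ast$. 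This establishes the ``any'' part of the claim.

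\emph{Converse direction.} Assume Eq.~\eqref{eq:reference_branch_condition} holds for some $x^\ast$ with positive $r_x$. Summing over $x$ gives $S=R\, M_{x^\ast}^\dagger P_{y^\ast}M_{x^\ast}$, where $R:=\sum_z r_z>0$. Setting $q_x:=r_x/R$, each $q_x$ is positive and the $q_x$ sum to one, so $q\in\Delta_X^\circ$. Moreover,
\begin{align}
M_x^\dagger P_{y^\ast}M_x = r_x R^{-1} S = q_x S,
\end{align}
which is exactly Eq.~\eqref{eq:information_erasure_operator}. Together with the forward direction, this also yields the ``some (equivalently, any)'' equivalence: from a condition holding at one reference we recover information erasure, and from information erasure the condition holds at every reference.

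\emph{Posterior distribution.} Under the condition, Eq.~\eqref{eq:postselected_distribution} becomes
\begin{align}
\frac{r_x \operatorname{Tr}[\rho A]}{R\operatorname{Tr}[\rho A]}, \qquad A:=M_{x^\ast}^\dagger P_{y^\ast}M_{x^\ast},
\end{align}
for every $\rho$ with $\operatorname{Tr}[\rho S]>0$, which gives Eq.~\eqref{eq:q_from_reference_ratio}. The only point that needs care is that such states exist. Since $S\geq 0$ and $S\neq 0$ by admissibility, $S$ has a positive eigenvalue, and the corresponding eigenprojector state $\rho$ satisfies $\operatorname{Tr}[\rho S]>0$, so the posterior is well defined there. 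Beyond this, the argument is just bookkeeping.
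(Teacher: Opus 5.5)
Your proof is correct and follows essentially the same route as the paper's: $r_x=q_x/q_{x^\ast}$ for the forward direction (valid for every reference $x^\ast$), then summing over $x$ and setting $q_x=r_x/\sum_z r_z$ for the converse, then reading off the posterior. Your remark that admissibility guarantees a state with $\operatorname{Tr}[\rho S]>0$ is a small addition the paper leaves implicit, though the divisions themselves only need $q_{x^\ast}>0$ and $\sum_z r_z>0$, not $S\neq 0$.
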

\noindent
The probabilistic meaning of this lemma is that, if the information erasure is realized, the joint probability
distribution is proportional to the reference branch $x^\ast$ for every branch $x\in X$:
\begin{align}
    p(x,y^\ast|\rho) = r_x p(x^\ast,y^\ast|\rho),\qquad\forall x\in X.
\end{align}

The proof is given in Appendix~\ref{app:structural_proofs}.
Lemma~\ref{lem:reference_branch_condition} will be used throughout the following analysis.

Although the reference-branch characterization provides a simple operator condition for information erasure,
it is still difficult to solve the condition for the quantum instrument in general.
However, as shown in the following theorems, we can identify classes of quantum instruments that exhibit
information erasure or quantum imprint, and show that the classification is nontrivial.

\begin{prop}[Regular instruments under rank-one postselection]
    \label{prop:regular_erasure}

    Let $\{M_x\}_{x\in X}$ be a regular quantum instrument.
    Then there exists a rank-one POVM element $P_{y^\ast}$ that realizes information erasure if and only if,
    for some (and hence every) $x^\ast\in X$, the operators
    \begin{align}
        \left\{\left(M_xM_{x^\ast}^{-1}\right)^\dagger\right\}_{x\in X}
    \end{align}
    have a common eigenvector.

\end{prop}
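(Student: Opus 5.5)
We reduce the statement to the reference-branch condition of Lemma~\ref{lem:reference_branch_condition}, specialized to a rank-one effect. Write $P_{y^\ast}=\lambda\ketbra{\phi}{\phi}$ with $0<\lambda\le 1$ and $\ket{\phi}\neq 0$ (normalized). Then
\begin{align}
M_x^\dagger P_{y^\ast}M_x=\lambda\ketbra{\psi_x}{\psi_x},\qquad \ket{\psi_x}:=M_x^\dagger\ket{\phi}.
\end{align}
Since every $M_x$ is invertible, $\ket{\psi_x}\neq 0$. Hence $\sum_z M_z^\dagger P_{y^\ast}M_z\neq 0$, so every such rank-one effect lies in $\operatorname{Adm}(\mathcal{I})$ automatically. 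By the lemma, information erasure with this $P_{y^\ast}$ is therefore equivalent to the existence of $r_x>0$, with $r_{x^\ast}=1$, such that $\ketbra{\psi_x}{\psi_x}=r_x\ketbra{\psi_{x^\ast}}{\psi_{x^\ast}}$ for all $x$.

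The key elementary fact is that two nonzero rank-one projectors-up-to-scale $\ketbra{a}{a}$ and $r\ketbra{b}{b}$ with $r>0$ are equal iff $\ket{a}=c\ket{b}$ for some $c\in\mathbb{C}$ with $|c|^2=r$. So the condition becomes $M_x^\dagger\ket{\phi}=c_x M_{x^\ast}^\dagger\ket{\phi}$ with $c_x\neq 0$. Set $\ket{\chi}:=M_{x^\ast}^\dagger\ket{\phi}$; this is a bijective change of variables because $M_{x^\ast}$ is invertible. The condition then reads
\begin{align}
M_x^\dagger (M_{x^\ast}^{\dagger})^{-1}\ket{\chi}=\left(M_{x^\ast}^{-1}\right)^\dagger{}^{\!\!}\;\text{-conjugated form}\;=\left(M_xM_{x^\ast}^{-1}\right)^\dagger\ket{\chi}=c_x\ket{\chi},
\end{align}
using $(M_xM_{x^\ast}^{-1})^\dagger=(M_{x^\ast}^{-1})^\dagger M_x^\dagger=(M_{x^\ast}^\dagger)^{-1}M_x^\dagger$. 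In this step I need to check carefully the order of factors and that the resulting operator matches the one in the statement.

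Going in the other direction, I need to rewrite this as $(M_{x^\ast}^\dagger)^{-1}M_x^\dagger\ket{\phi}$ rather than acting on $\ket{\chi}$. Starting from $M_x^\dagger\ket{\phi}=c_xM_{x^\ast}^\dagger\ket{\phi}$ and applying $(M_{x^\ast}^\dagger)^{-1}$ gives
\begin{align}
(M_xM_{x^\ast}^{-1})^\dagger\ket{\phi}=c_x\ket{\phi}.
\end{align}
So $\ket{\phi}$ itself is the common eigenvector, and no change of variables is needed. The eigenvalues are automatically nonzero, since the operators are invertible, so $r_x=|c_x|^2>0$. Conversely, given a common eigenvector $\ket{\phi}$, set $P_{y^\ast}=\ketbra{\phi}{\phi}/\braket{\phi}$. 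Reversing the steps yields the reference-branch condition with $r_x=|c_x|^2$, and $r_{x^\ast}=1$ holds because the $x^\ast$ operator is $I$. The lemma then gives information erasure, with $p(x|y^\ast)=|c_x|^2/\sum_z|c_z|^2$.

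For the ``some (hence every) $x^\ast$'' clause, suppose $\ket{\phi}$ is a common eigenvector for the reference $x^\ast$, with $M_x^\dagger\ket{\phi}=c_xM_{x^\ast}^\dagger\ket{\phi}$. For another reference $x'$ we have $M_{x'}^\dagger\ket{\phi}=c_{x'}M_{x^\ast}^\dagger\ket{\phi}$ with $c_{x'}\neq0$. Hence $M_x^\dagger\ket{\phi}=(c_x/c_{x'})M_{x'}^\dagger\ket{\phi}$, so the same $\ket{\phi}$ is a common eigenvector for reference $x'$. This also follows from the lemma's own ``equivalently, any'' clause.

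The main obstacle is only bookkeeping: getting the adjoint and inverse ordering right, and confirming that admissibility and positivity of $r_x$ come for free from regularity.
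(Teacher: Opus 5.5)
Your proof is correct and follows essentially the same route as the paper: write the rank-one effect as $c\ketbra{v}{v}$, use regularity to get $M_x^\dagger\ket{v}\neq 0$ (which also gives admissibility), turn the reference-branch condition of Lemma~\ref{lem:reference_branch_condition} into $M_x^\dagger\ket{v}=\lambda_x M_{x^\ast}^\dagger\ket{v}$ with $|\lambda_x|^2=r_x$, and apply $(M_{x^\ast}^\dagger)^{-1}$ so that $\ket{v}$ itself is the common eigenvector; your explicit check of the ``every $x^\ast$'' clause spells out what the paper covers in a one-line remark. Delete the garbled change-of-variables display: its left-hand side $M_x^\dagger(M_{x^\ast}^\dagger)^{-1}=(M_{x^\ast}^{-1}M_x)^\dagger$ is not the operator in the statement, and the version you give right after it, with $\ket{\phi}$ as the eigenvector, is the correct one.
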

\noindent
The proof is given in Appendix~\ref{app:regular_instruments}.
Equation~\eqref{eq:reference_branch_condition} can be rewritten as
\begin{align}
    \left(M_xM_{x^\ast}^{-1}\right)^\dagger P_{y^\ast} \left(M_xM_{x^\ast}^{-1}\right) = r_x P_{y^\ast}.
\end{align}
Intuitively, suppose that $\left(M_xM_{x^\ast}^{-1}\right)^\dagger$ have a normalized common eigenvector
$\ket{v}$ with corresponding eigenvalues $\lambda_x$,
\begin{align}
    \left(M_xM_{x^\ast}^{-1}\right)^\dagger \ketbra{v}\left(M_xM_{x^\ast}^{-1}\right) = |\lambda_x|^2\ketbra{v}
\end{align}
holds for every $x\in X$.
Therefore by considering $\ketbra{v}$ as $P_{y^\ast}$ and $|\lambda_x|^2\in \mathbb{R}$ as $r_x$, information
erasure can be realized.

\begin{cor}[Binary regular instruments]
    \label{cor:binary_regular_erasure}

    Let $X=\{0,1\}$ and suppose that the quantum instrument $(M_0,M_1)$ is regular.
    Then there always exists a POVM element $P_{y^\ast}$ that realizes information erasure.
\end{cor}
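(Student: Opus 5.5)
This follows from Proposition~\ref{prop:regular_erasure}. With $X=\{0,1\}$ we choose the reference outcome $x^\ast=0$. The family $\{(M_xM_0^{-1})^\dagger\}_{x\in X}$ then contains only two operators: $(M_0M_0^{-1})^\dagger=I$ and $A:=(M_1M_0^{-1})^\dagger$. So the common-eigenvector condition in Proposition~\ref{prop:regular_erasure} reduces to $A$ having an eigenvector, because every nonzero vector is an eigenvector of $I$. Since $\mathcal{H}$ is finite-dimensional and complex, the characteristic polynomial of $A$ has a root $\lambda$, and we take a normalized eigenvector $\ket{v}$ with $A\ket{v}=\lambda\ket{v}$. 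Proposition~\ref{prop:regular_erasure} then yields a rank-one $P_{y^\ast}$ realizing information erasure.

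To keep the argument self-contained I would also construct $P_{y^\ast}$ directly, following the intuition given after Proposition~\ref{prop:regular_erasure}. Set $P_{y^\ast}=\ketbra{v}$, which satisfies $0\le P_{y^\ast}\le I$. Because $M_1M_0^{-1}=A^\dagger$,
\begin{align}
    M_1^\dagger P_{y^\ast}M_1 = M_0^\dagger (A\ketbra{v}A^\dagger) M_0 = |\lambda|^2 M_0^\dagger P_{y^\ast}M_0 .
\end{align}
This is Eq.~\eqref{eq:reference_branch_condition} with $r_0=1$ and $r_1=|\lambda|^2$.

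Lemma~\ref{lem:reference_branch_condition} requires two further checks. First, $r_1>0$: since $M_0$ and $M_1$ are invertible, $A$ is invertible, so $\lambda\neq0$. Second, $P_{y^\ast}\in\operatorname{Adm}(\mathcal{I})$, i.e.\ $M_0^\dagger\ketbra{v}M_0+M_1^\dagger\ketbra{v}M_1\neq0$. This holds because $M_0^\dagger\ket{v}\neq0$ by invertibility of $M_0$, so the first term is a nonzero positive operator and the sum of positive operators cannot cancel it.

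With both checks in place, Lemma~\ref{lem:reference_branch_condition} gives information erasure with posterior $q=(1,|\lambda|^2)/(1+|\lambda|^2)\in\Delta_X^\circ$. The only real obstacle is making sure the eigenvalue is nonzero and the postselection is admissible, and regularity settles both. Existence of an eigenvector needs nothing beyond working over $\mathbb{C}$.
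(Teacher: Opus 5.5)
Your proof is correct and follows the paper's own argument. Choosing $x^\ast=0$ reduces the common-eigenvector condition of Proposition~\ref{prop:regular_erasure} to the existence of an eigenvector of $(M_1M_0^{-1})^\dagger$, which holds over $\mathbb{C}$ in finite dimensions; your direct checks that the eigenvalue is nonzero and that $P_{y^\ast}$ is admissible are sound, and they simply repeat the converse direction of the paper's proof of that proposition.
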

\begin{proof}
    Choose $x^\ast=0$.
    Since $X=\{0,1\}$, it is sufficient to consider the single operator
    \begin{align}
        \left(M_1M_0^{-1}\right)^\dagger .
    \end{align}
    In finite dimensions, this operator always has an eigenvector.
    Hence, by Proposition~\ref{prop:regular_erasure}, there exists a rank-one POVM element $P_{y^\ast}$ that
    realizes information erasure.
\end{proof}

So far, we have characterized information erasure for a given instrument.
We now consider the complementary problem in which the postselection effect is fixed.
\begin{prop}[Regular instruments for a full-rank postselection]
    \label{prop:full_rank_postselection_characterization}

    Fix an invertible POVM effect $P_{y^\ast}$. A regular single-Kraus instrument realizes information erasure
    under $P_{y^\ast}$ if and only if its Kraus operators admit the representation
    \begin{align}
        M_x = \sqrt{r_x}\,P_{y^\ast}^{-1/2}U_x\left(\sum_{z\in X}r_z U_z^\dagger P_{y^\ast}^{-1}U_z\right)^{-1/2},
        \label{eq:full_rank_postselection_family}
    \end{align}
    where $r_x>0$ is arbitrary and $U_x$ is arbitrary unitary operator for every $x\in X$.
\end{prop}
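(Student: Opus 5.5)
The proof will run through Lemma~\ref{lem:reference_branch_condition}, which reduces information erasure under $P:=P_{y^\ast}$ to the existence of positive $r_x$ and a positive operator $A$ with $M_x^\dagger P M_x = r_x A$ for all $x$. Here $A$ plays the role of $M_{x^\ast}^\dagger P M_{x^\ast}$ up to normalization. Since $P$ and all $M_x$ are invertible, $A$ is positive definite. We also have $\sum_x \mathcal{I}_x^\dagger(P)=(\sum_z r_z)A\neq 0$, so admissibility is automatic.

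For the ``only if'' direction, write $M_x^\dagger P M_x = r_x A$ as
\begin{align}
\bigl(P^{1/2}M_xA^{-1/2}\bigr)^\dagger\bigl(P^{1/2}M_xA^{-1/2}\bigr)=r_x I .
\end{align}
Hence $U_x:=r_x^{-1/2}P^{1/2}M_xA^{-1/2}$ is unitary; it is square and invertible, so an isometry is unitary. Solving gives $M_x=\sqrt{r_x}\,P^{-1/2}U_xA^{1/2}$. Next impose the completeness relation $\sum_x M_x^\dagger M_x=I$, which reads $A^{1/2}\bigl(\sum_z r_zU_z^\dagger P^{-1}U_z\bigr)A^{1/2}=I$. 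Writing $B:=\sum_z r_zU_z^\dagger P^{-1}U_z$, which is positive definite, this says $A^{1/2}BA^{1/2}=I$, i.e.\ $B=A^{-1}$. So $A^{1/2}=B^{-1/2}$, which is exactly Eq.~\eqref{eq:full_rank_postselection_family}.

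For the ``if'' direction, take $M_x$ of the stated form with $B$ as above. The conditions $r_x>0$, $U_x$ unitary and $P^{-1}>0$ make $B$ positive definite, so $B^{-1/2}$ exists and each $M_x$ is invertible. Completeness follows from
\begin{align}
\sum_x M_x^\dagger M_x=B^{-1/2}\Bigl(\sum_x r_xU_x^\dagger P^{-1}U_x\Bigr)B^{-1/2}=I .
\end{align}
Direct computation then gives $M_x^\dagger P M_x=r_xB^{-1}$, so Lemma~\ref{lem:reference_branch_condition} applies with ratios $r_x/r_{x^\ast}$. The posterior is $r_x/\sum_z r_z$.

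The main obstacle is the step $A^{1/2}BA^{1/2}=I\Rightarrow A=B^{-1}$. To justify it, multiply on the left by $A^{-1/2}$ and on the right by $A^{-1/2}$, which gives $B=A^{-1}$ directly. Uniqueness of the positive square root then identifies $A^{1/2}$ with $B^{-1/2}$. A second point to handle is that the normalization of $A$ relative to the reference branch is absorbed into the free scalars $r_x$, so the overall scale of the $r_x$ is fixed by completeness and is consistent.
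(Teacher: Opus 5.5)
Your proof is correct and follows the paper's argument. The paper likewise uses the reference-branch lemma and takes $U_x$ as the unitary polar factor of $P_{y^\ast}^{1/2}M_x$, which is exactly your explicit $r_x^{-1/2}P_{y^\ast}^{1/2}M_xA^{-1/2}$; it then uses completeness to identify $A^{-1}$ with $\sum_z r_zU_z^\dagger P_{y^\ast}^{-1}U_z$, and proves sufficiency by the same direct computation.
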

\noindent
The proof is given in Appendix~\ref{app:full_rank_postselection}.
This gives the complete family of regular instruments associated with a full-rank postselection.
Furthermore, in the qubit case, every nonzero admissible postselection effect is either rank one or full rank.
Therefore, Propositions~\ref{prop:regular_erasure} and~\ref{prop:full_rank_postselection_characterization}
together provide a complete characterization of information erasure for regular single-Kraus qubit
instruments.

Importantly, information erasure within this family is compatible with informational completeness when the
instrument is expressed by an associated POVM.
Recall that a POVM $\{E_x\}_{x\in X}$ is informationally complete if
$\operatorname{span}_{\mathbb{R}}\{E_x\}_{x\in X} =\operatorname{Herm}(\mathcal{H})$, equivalently, if its
outcome probabilities uniquely determine an arbitrary input state~\cite{Prugovecki1977,DArianoPerinottiSacchi2004}.

\begin{thm}[Informational completeness under information erasure]

    \label{thm:informationally_complete_erasure}

    Let $d:=\dim(\mathcal{H})\geq 2$ and $n:=|X|$.
    Fix an invertible POVM effect $P_{y^\ast}$ and arbitrary positive numbers $\{r_x\}_{x\in X}$, and consider
    the family of regular information-erasure-inducing instruments in
    Eq.~\eqref{eq:full_rank_postselection_family}.

    There exists a choice of unitaries $\{U_x\}_{x\in X}$ such that the associated POVM
    $\{E_x=M_x^\dagger M_x\}_{x\in X}$ is informationally complete if and only if
    \begin{align}
        P_{y^\ast}\not\propto I \qquad\text{and}\qquad n\geq d^2 .
    \end{align}
\end{thm}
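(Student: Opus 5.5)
The plan is to reduce informational completeness of $\{E_x\}$ to a spanning property of the unitary orbit of $A:=P_{y^\ast}^{-1}$. From Eq.~\eqref{eq:full_rank_postselection_family}, with $S:=\sum_{z\in X}r_zU_z^\dagger P_{y^\ast}^{-1}U_z$, which is positive definite, we get
\begin{align}
E_x=M_x^\dagger M_x=r_x\,S^{-1/2}\,U_x^\dagger A\,U_x\,S^{-1/2}.
\end{align}
The map $H\mapsto S^{-1/2}HS^{-1/2}$ is an invertible real-linear map of $\operatorname{Herm}(\mathcal{H})$ onto itself, and each $r_x>0$. Hence $\operatorname{span}_{\mathbb{R}}\{E_x\}=\operatorname{Herm}(\mathcal{H})$ if and only if $\operatorname{span}_{\mathbb{R}}\{U_x^\dagger AU_x\}_{x\in X}=\operatorname{Herm}(\mathcal{H})$. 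Note that $S$ depends on the $U_x$, but it is invertible for every choice, so the equivalence holds uniformly.

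\emph{Necessity.} $\operatorname{Herm}(\mathcal{H})$ has real dimension $d^2$, and $n$ operators span at most an $n$-dimensional space, so $n\ge d^2$ is required. If $P_{y^\ast}\propto I$, then $A\propto I$ and every $U_x^\dagger AU_x\propto I$. Their span is then one-dimensional, which is smaller than $d^2\ge4$, so informational completeness fails.

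\emph{Sufficiency.} Assume $A\not\propto I$. Let $V:=\operatorname{span}_{\mathbb{R}}\{U^\dagger AU: U \text{ unitary}\}$. I will show $V=\operatorname{Herm}(\mathcal{H})$. Given this, pick $d^2$ unitaries whose conjugates of $A$ form a basis, which is possible because a spanning set contains a basis. Assign these to $d^2$ of the outcomes and choose the remaining $U_x$ arbitrarily; this is where $n\ge d^2$ enters.

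To show $V=\operatorname{Herm}(\mathcal{H})$, write $A=\tfrac{\operatorname{Tr}A}{d}I+A_0$ with $A_0$ traceless and nonzero. Since $A>0$, $\operatorname{Tr}A>0$.
\begin{itemize}
\item[(i)] $I\in V$. Averaging $U^\dagger AU$ over the diagonal phase unitaries and the permutation matrices in an eigenbasis of $A$ gives $\tfrac{\operatorname{Tr}A}{d}I$. This is a finite convex combination of elements of $V$, so it lies in $V$, and hence $A_0\in V$.
\item[(ii)] All traceless diagonal matrices lie in $V$. Diagonalize $A_0=\operatorname{diag}(a_1,\dots,a_d)$ with some $a_i\ne a_j$. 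Conjugating by permutations and subtracting gives $(a_i-a_j)(\ketbra{i}-\ketbra{j})$, and these differences span the traceless diagonals.
\item[(iii)] All off-diagonal Hermitian matrices lie in $V$. Conjugating $\ketbra{i}-\ketbra{j}$ by a rotation in the $\{\ket{i},\ket{j}\}$ plane yields the Pauli-type operators $\ketbra{i}{j}+\ketbra{j}{i}$ and $i(\ketbra{j}{i}-\ketbra{i}{j})$, using the Hadamard-like and phase-Hadamard rotations respectively.
\end{itemize}
Together with $I$, items (ii) and (iii) span $\operatorname{Herm}(\mathcal{H})$.

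I expect the main obstacle to be step (ii)--(iii), i.e.\ proving that the traceless part of $A$ generates all of $\operatorname{Herm}(\mathcal{H})$ under conjugation. This is equivalent to the irreducibility of the adjoint action of $U(d)$ on traceless Hermitian operators, and the explicit construction above is designed to avoid invoking representation theory. A second point to check is that taking $U_x$ to repeat for the extra outcomes is harmless, since the statement places no distinctness requirement on the unitaries.
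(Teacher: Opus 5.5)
Your proof is correct. It reaches the result by a different route from the paper only in the key spanning step.

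The reduction to $\operatorname{span}_{\mathbb{R}}\{U_x^\dagger A U_x\}$ via the invertible map $H\mapsto S^{-1/2}HS^{-1/2}$ is exactly as in the paper. So are the necessity argument and the final extraction of $d^2$ conjugates.

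The difference is in how you show that the unitary orbit of $A=P_{y^\ast}^{-1}$ spans $\operatorname{Herm}(\mathcal{H})$. The paper argues dually:
\begin{itemize}
\item it takes $B$ Hilbert--Schmidt orthogonal to the orbit and uses conjugation invariance to diagonalize $B$;
\item it tests $B$ against two conjugates of $A$ that are diagonal in $B$'s eigenbasis and differ only by swapping two distinct eigenvalues $\lambda\neq\mu$, which gives $b_i=b_j$ and hence $B=cI$;
\item it removes $c$ using $\operatorname{Tr}(BA)=c\operatorname{Tr}A$ with $\operatorname{Tr}A>0$.
\end{itemize}
You argue directly by building a spanning set. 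You get $I$ from the permutation average and the traceless diagonals from the same swap-and-subtract trick. You get the off-diagonal Hermitian elements from two-level Hadamard-type rotations.

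The paper's dual route is shorter. Because the test operator can be diagonalized, it never needs the off-diagonal sector or a separate production of $I$.

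Your route is more explicit. It exhibits a concrete finite family of unitaries (permutations and two-level rotations in an eigenbasis of $A$) from which the $d^2$ required unitaries can be selected. It also makes transparent which hypothesis does what: $\operatorname{Tr}A\neq 0$ supplies only the $I$ component, while $A\not\propto I$ supplies the whole traceless part.

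Two facts you use implicitly should be stated.
\begin{itemize}
\item The orbit span is invariant under $H\mapsto W^\dagger H W$ for every unitary $W$. You need this because in steps (ii) and (iii) you conjugate $A_0$ and $|i\rangle\langle i|-|j\rangle\langle j|$, which are elements of the span but not themselves conjugates of $A$.
\item In step (ii) you must first permute $a_i,a_j$ into arbitrary positions $k,l$ before subtracting. Otherwise you obtain only the single difference for the pair $(i,j)$, not all differences $|k\rangle\langle k|-|l\rangle\langle l|$.
\end{itemize}
Neither point is a gap.
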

\noindent
The proof is given in Appendix~\ref{app:informationally_complete_erasure}.

Thus, information erasure is not restricted to first measurements with limited state-discrimination
capability. A measurement can be informationally complete before postselection, and hence sufficient in
principle for full state tomography, while the distribution of the same measurement record becomes completely
independent of the input state after an appropriate postselection.
This shows that information erasure cannot be attributed simply to a lack of information acquired by the first
measurement.

Having discussed information-erasure-inducing instruments, we now turn to quantum imprint.
We begin with projective measurements, which provide a familiar and fundamental example.
\begin{prop}[Projective measurements are quantum imprints]
    \label{prop:projective_quantum_imprint}
    Let $\{M_x\}_{x\in X}$ be a single-Kraus instrument with $|X|\geq 2$, and suppose that
    $\{M_x^\dagger M_x\}_{x\in X}$ is a PVM.
    Then no admissible postselection effect $P_{y^\ast}$ can realize information erasure.
    Hence, the instrument is a quantum imprint.
\end{prop}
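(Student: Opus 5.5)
The plan is to argue by contradiction using the reference-branch characterization, Lemma~\ref{lem:reference_branch_condition}. Write $E_x := M_x^\dagger M_x$. By hypothesis these are mutually orthogonal projectors summing to $I$.

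Suppose some admissible $P_{y^\ast}$ realizes information erasure. Then, for a fixed reference outcome $x^\ast$, there are positive numbers $r_x$ with
\begin{align}
    M_x^\dagger P_{y^\ast} M_x = r_x\, M_{x^\ast}^\dagger P_{y^\ast} M_{x^\ast}\qquad \text{for all } x\in X .
\end{align}
The key observation is the support of each operator $M_x^\dagger P_{y^\ast} M_x$. Its support is contained in the support of $M_x^\dagger M_x = E_x$. Indeed, $\ker M_x = \ker E_x$, so if $E_x\ket{\psi}=0$ then $M_x\ket{\psi}=0$, and hence $M_x^\dagger P_{y^\ast} M_x\ket{\psi}=0$. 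Since the operator is Hermitian, its range lies in $(\ker E_x)^\perp = \operatorname{ran} E_x$. Equivalently, $E_x\, M_x^\dagger P_{y^\ast} M_x\, E_x = M_x^\dagger P_{y^\ast} M_x$, which one can also see from $M_x E_x = M_x$. That identity follows from $M_x E_x = M_x M_x^\dagger M_x$ together with the polar decomposition and $E_x^2=E_x$, which makes $M_x$ a partial isometry.

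Now pick any $x\neq x^\ast$; this is possible because $|X|\ge 2$. The left side of the displayed equation is supported on $\operatorname{ran}E_x$, and the right side is supported on $\operatorname{ran}E_{x^\ast}$. These subspaces are orthogonal. Multiplying both sides on the left and right by $E_{x^\ast}$ therefore gives
\begin{align}
    0 = r_x\, M_{x^\ast}^\dagger P_{y^\ast} M_{x^\ast}.
\end{align}
Since $r_x>0$, this forces $M_{x^\ast}^\dagger P_{y^\ast} M_{x^\ast}=0$. Substituting back into the displayed equation gives $M_x^\dagger P_{y^\ast} M_x = 0$ for every $x$. Summing over $x$ yields $\sum_z \mathcal{I}_z^\dagger(P_{y^\ast})=0$, which contradicts $P_{y^\ast}\in\operatorname{Adm}(\mathcal{I})$. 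Hence no admissible effect realizes information erasure. Because quantum imprint is the logical negation of being information-erasure-inducing, the instrument is a quantum imprint.

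The only step needing care is the support claim. I would establish it cleanly via the kernel argument $\ker M_x=\ker E_x$, which avoids any partial-isometry detour. The remaining steps are immediate from the lemma and admissibility.
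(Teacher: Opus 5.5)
Your proposal is correct and follows essentially the same route as the paper. Both arguments apply Lemma~\ref{lem:reference_branch_condition}, use that each $M_x^\dagger P_{y^\ast}M_x$ is supported in the mutually orthogonal ranges $\operatorname{Ran}(M_x^\dagger M_x)$ to force the reference branch and hence every branch to vanish, and then contradict admissibility. The only difference is that you get the support containment from $\ker M_x=\ker(M_x^\dagger M_x)$ and the orthogonality of PVM elements, whereas the paper derives the identities $M_xM_x^\dagger M_x=M_x$ and $M_zM_x^\dagger M_x=0$ ($z\neq x$) algebraically and then multiplies the reference-branch relation on the left by $M_x^\dagger M_x$.
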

\noindent
The proof is given in Appendix~\ref{app:projective_imprint}.
Indeed, each operator $M_x^\dagger P_{y^\ast}M_x$ has support contained in
$\operatorname{Ran}(M_x^\dagger M_x)$.
Since the PVM elements corresponding to different branches have orthogonal ranges, these operators have
mutually orthogonal supports.
On the other hand, from Lemma~\ref{lem:reference_branch_condition}, information erasure would require
\begin{align}
    M_x^\dagger P_{y^\ast}M_x = r_x M_{x^\ast}^\dagger P_{y^\ast}M_{x^\ast}, \qquad r_x>0,
\end{align}
for all $x$, so that all branches have the same nonzero support.
This is impossible.

Quantum imprint is not restricted to projective measurements.
The following proposition shows that it can also arise within regular instruments.

\begin{prop}[Regular quantum imprints]
    \label{prop:regular_imprint_arbitrary_dimension}
    Let $d:=\dim(\mathcal{H})\geq 2$.
    Then there exists a three-outcome regular single-Kraus instrument that is a quantum imprint.
    Let
    \begin{align}
        S_d &:= \sum_{j = 1}^{d-1}\ketbra{j}{j+1}, \notag \\
        A_0 &:= I,\ A_1 := I-aS_d,\ A_2 := I+b S_d^\dagger,
    \end{align}
    where $a,b>0$, and define
    \begin{align}
        M_x := A_x\left(\sum_{z = 0}^{2}A_z^\dagger A_z\right)^{-1/2},\quad x\in\{0,1,2\}.
    \end{align}
    Then $\{M_x\}_{x=0}^2$ is regular and a quantum imprint.
\end{prop}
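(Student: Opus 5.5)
Regularity is immediate, so the real work is to show that no admissible effect works, using Lemma~\ref{lem:reference_branch_condition} with reference outcome $x^\ast=0$.

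\emph{Regularity.} Each $A_x$ is invertible, since $A_1=I-aS_d$ and $A_2=I+bS_d^\dagger$ are unipotent ($S_d$ is nilpotent). Then $G:=\sum_z A_z^\dagger A_z\geq A_0^\dagger A_0=I>0$ is invertible, and $\sum_x M_x^\dagger M_x=G^{-1/2}GG^{-1/2}=I$. Hence every $M_x$ is invertible.

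\emph{Reduction.} Suppose some $P:=P_{y^\ast}\in\operatorname{Adm}(\mathcal{I})$ realizes information erasure. Because the $M_x$ are invertible, $\sum_z M_z^\dagger P M_z\neq0$ forces $P\neq0$. By Lemma~\ref{lem:reference_branch_condition} there are $r_1,r_2>0$ with
\begin{align}
G^{-1/2}A_x^\dagger P A_x G^{-1/2}=r_x\,G^{-1/2}PG^{-1/2}.
\end{align}
Conjugating by $G^{1/2}$ gives $A_x^\dagger P A_x=r_xP$ for $x=1,2$. It therefore suffices to show that the only positive semidefinite $P$ satisfying both relations is $P=0$.

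\emph{Main step.} Set $N=S_d$, which is nilpotent with $N\ket{j+1}=\ket{j}$.
\begin{itemize}
\item Since $A_1$ is unipotent, taking determinants (or traces) of $A_1^\dagger PA_1=r_1P$ when $P$ is invertible gives $r_1=1$.
\item In general I would use the spectral argument first. The map $\Phi_1:P\mapsto A_1^\dagger PA_1$ equals $(A_1^{T}\otimes A_1^\dagger)$ acting on $\operatorname{vec}(P)$. This is a tensor product of unipotent operators, hence unipotent with sole eigenvalue $1$. So $r_1=1$, and likewise $r_2=1$.
\end{itemize}
The condition $A_1^\dagger PA_1=P$ expands to $-a(N^\dagger P+PN)+a^2N^\dagger PN=0$. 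Compressing to $P\geq0$, the nonnegative-definite structure should force $PN=0$ restricted appropriately. I would argue via kernels: write $A_1=I-aN$ and iterate $P=(A_1^k)^\dagger PA_1^k$. The entries of $A_1^k$ grow polynomially in $k$, namely $A_1^k=\sum_m\binom{k}{m}(-a)^mN^m$. Comparing the leading power of $k$ on both sides yields $(N^m)^\dagger PN^m=0$ for the top $m$ with $N^mP$-contribution, hence $PN^m=0$. Descending, this gives $PN=0$, i.e.\ $P$ is supported on $\ker N^\dagger$ on both sides. Since $P=P^\dagger$, we get $P=c\ketbra{1}{1}$, because $\operatorname{Ran}N=\operatorname{span}\{\ket1,\dots,\ket{d-1}\}$ while $\ker N^\dagger$… I will track the conventions carefully: $PN=0$ means $P$ annihilates $\operatorname{Ran}N$, so $P\propto\ketbra{d}{d}$.

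By the symmetric argument with $N^\dagger$ from $A_2$, we get $PN^\dagger=0$, hence $P\propto\ketbra{1}{1}$. As $d\geq2$, these two conclusions are incompatible unless $P=0$. This contradicts admissibility, so the instrument is a quantum imprint.

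The main obstacle is the polynomial-growth step $A_1^\dagger PA_1=P\Rightarrow PN=0$ for $P\geq0$. If the iteration argument is messy, I would use a cleaner route. Taking the $(d,d)$-type diagonal entries of the linear relation gives $\bra{v}N^\dagger PN\ket{v}$ terms; I would then show $\operatorname{Tr}(N^\dagger PN\,Q)=0$ for a suitable positive $Q$, implying $PN=0$ by positivity.
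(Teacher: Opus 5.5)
Your proof is correct, but it takes a different route from the paper's.

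\textbf{The paper's route.} The paper splits by the rank of $P_{y^\ast}$.
In the rank-deficient case, it shows that $\ker P_{y^\ast}$ is invariant under both $S_d$ and $S_d^\dagger$. This follows from $\bra{v}A_i^\dagger P_{y^\ast}A_i\ket{v}=0$ for $\ket{v}\in\ker P_{y^\ast}$. It then walks from any kernel vector down to $\ket{1}$ and back up to every $\ket{j}$, giving $P_{y^\ast}=0$.
In the full-rank case, it invokes Proposition~\ref{prop:full_rank_postselection_characterization} to conclude that $P_{y^\ast}^{1/2}A_1P_{y^\ast}^{-1/2}$ is a multiple of a unitary. Then $A_1$ would be diagonalizable, which a unipotent $A_1\neq I$ is not.

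\textbf{Your route.} You handle all ranks uniformly.
Unipotency of the superoperator $P\mapsto A_1^\dagger PA_1$ forces $r_1=1$. The identity $P=(A_1^k)^\dagger PA_1^k$ is then polynomial in $k$. Let $m_0$ be the largest $m$ with $PN^m\neq 0$; the top coefficient is $\frac{a^{2m_0}}{(m_0!)^2}(N^{m_0})^\dagger PN^{m_0}$. Positivity of $P$ then yields $PN=0$, i.e.\ $P\propto\ketbra{d}{d}$. The mirror argument with $A_2$ gives $P\propto\ketbra{1}{1}$, and $d\geq 2$ finishes.

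\textbf{What each route buys.} Your version is self-contained, with no appeal to the full-rank characterization. It also gives a sharper intermediate fact. The nonzero positive solutions of each relation, taken separately, form a single ray: $\ketbra{d}{d}$ for $A_1$ and $\ketbra{1}{1}$ for $A_2$. This matches the fact that $A_1^\dagger=I-aS_d^\dagger$ and $A_2^\dagger=I+bS_d$ each have only one eigenline, so the common-eigenvector criterion of Proposition~\ref{prop:regular_erasure} fails. The paper's argument is more elementary in the rank-deficient case, since it needs only kernel invariance and no growth argument, and it reuses its own structure theory for the full-rank case.

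\textbf{Minor slips.}
\begin{enumerate}
\item Taking traces of $A_1^\dagger PA_1=r_1P$ does not by itself yield $r_1=1$. The determinant argument (for invertible $P$) or your spectral argument does.
\item The passing claim $P=c\ketbra{1}{1}$ from $PN=0$ should be deleted in favour of your corrected $P\propto\ketbra{d}{d}$.
\item In the growth step, "the top $m$ with $N^mP$-contribution" should read "the largest $m$ with $PN^m\neq0$".
\item The vague alternative route at the end is unnecessary, since the polynomial-growth argument already closes the step.
\end{enumerate}
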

\noindent
The proof is given in Appendix~\ref{app:regular_imprint_arbitrary_dimension}.

Together with Proposition~\ref{prop:regular_erasure}, this result shows that both information erasure and
quantum imprint can occur within regular instruments.
Thus, the distinction between the two cannot be determined solely by whether the instrument is regular.
This naturally raises the question of which structural features of the instrument are responsible for the
distinction.
Since a quantum instrument contains more information than its associated POVM
$\{E_x=M_x^\dagger M_x\}_{x\in X}$, we next ask whether the distinction is already visible at the POVM level.

The following theorem shows that it is not.
\begin{thm}[POVM-equivalent instruments with opposite information structures]
    \label{thm:same_povm_erasure_imprint}

    For every $d\geq 2$, there exists a three-outcome POVM $\{E_x\}_{x=0}^2$ admitting two regular
    single-Kraus realizations $\{M_x^{\mathrm{er}}\}_{x=0}^2$ and $\{M_x^{\mathrm{imp}}\}_{x=0}^2$ such that
    \begin{align}
        (M_x^{\mathrm{er}})^\dagger M_x^{\mathrm{er}} = (M_x^{\mathrm{imp}})^\dagger M_x^{\mathrm{imp}} = E_x
    \end{align}
    for every $x$, while the former is information-erasure-inducing and the latter is a quantum imprint.

\end{thm}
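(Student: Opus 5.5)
The natural route is to take the quantum imprint of Proposition~\ref{prop:regular_imprint_arbitrary_dimension} as $\{M_x^{\mathrm{imp}}\}_{x=0}^2$ and produce an erasure-inducing instrument with the same POVM by left-multiplying each Kraus operator by a unitary. Set $E_x := (M_x^{\mathrm{imp}})^\dagger M_x^{\mathrm{imp}}$, which is invertible since $M_x^{\mathrm{imp}}$ is regular. Every single-Kraus realization of $E_x$ has the form $V_x E_x^{1/2}$ with $V_x$ unitary, by the polar decomposition. Any such choice is automatically regular and satisfies $\sum_x (V_xE_x^{1/2})^\dagger V_xE_x^{1/2}=\sum_x E_x = I$, so it is a legitimate instrument with the same POVM. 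The imprint half of the statement is then already given by Proposition~\ref{prop:regular_imprint_arbitrary_dimension}, and the work is to pick unitaries $\{V_x\}$ so that $M_x^{\mathrm{er}}:=V_xE_x^{1/2}$ induces erasure.

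For the erasure half I would invoke Proposition~\ref{prop:regular_erasure} with $x^\ast=0$. It suffices that the operators $(M_x^{\mathrm{er}}(M_0^{\mathrm{er}})^{-1})^\dagger = V_0 E_0^{-1/2}E_x^{1/2}V_x^\dagger$ have a common eigenvector for $x=1,2$. The case $x=0$ is the identity and imposes nothing. Choose $V_0=I$, fix a unit vector $\ket{v}$, and set $\ket{w_x}:=E_0^{-1/2}E_x^{1/2}\ket{v}$; these are nonzero by invertibility. Then choose each unitary $V_x$ so that $V_x^\dagger\ket{v} = \ket{v}$ maps under $E_0^{-1/2}E_x^{1/2}$ to a multiple of $\ket v$.

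Written out, the requirement is
\begin{align}
E_0^{-1/2}E_x^{1/2}V_x^\dagger\ket{v}=\lambda_x\ket{v},
\end{align}
which is equivalent to
\begin{align}
V_x^\dagger\ket{v} = \lambda_x E_x^{-1/2}E_0^{1/2}\ket{v}.
\end{align}
Take $\lambda_x := 1/\|E_x^{-1/2}E_0^{1/2}\ket{v}\|$, which is positive. The right-hand side is then a unit vector, and a unitary $V_x^\dagger$ sending the unit vector $\ket v$ to any prescribed unit vector always exists, for instance a Householder reflection or any completion to a unitary. Hence $\ket{v}$ is a common eigenvector with nonzero eigenvalues $\lambda_x$. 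Proposition~\ref{prop:regular_erasure} then gives information erasure with the rank-one postselection $P_{y^\ast}=\ketbra{v}$, which is admissible because $\sum_z (M_z^{\mathrm{er}})^\dagger\ketbra{v}M_z^{\mathrm{er}}\neq 0$ by invertibility of the Kraus operators. The corresponding ratios are $r_x=\lambda_x^2>0$.

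I expect the only subtle point to be checking that the unitary freedom in the Kraus realization is exactly what is needed: the POVM fixes only $E_x^{1/2}$ up to left unitaries, and the erasure condition for rank one depends only on where $V_x^\dagger$ sends the single vector $\ket v$. The argument is otherwise short and works uniformly in $d\ge 2$, since Proposition~\ref{prop:regular_imprint_arbitrary_dimension} supplies the three-outcome imprint in every such dimension.
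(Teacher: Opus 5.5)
Your proposal is correct and follows essentially the same route as the paper: take the regular imprint of Proposition~\ref{prop:regular_imprint_arbitrary_dimension}, use the left-unitary freedom $M_x^{\mathrm{er}}=U_xE_x^{1/2}$, and fix each unitary on a single vector $\ket{v}$ so that Proposition~\ref{prop:regular_erasure} applies with $P_{y^\ast}=\ketbra{v}{v}$. The only difference is cosmetic: the paper takes $U_x^\dagger\ket{v}\propto E_x^{-1/2}\ket{v}$ for every $x$, which makes $\ket{v}$ an eigenvector of each $(M_x^{\mathrm{er}})^\dagger$ separately, whereas you set $V_0=I$ and $V_x^\dagger\ket{v}\propto E_x^{-1/2}E_0^{1/2}\ket{v}$ relative to the reference branch, which works equally well.
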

\noindent
The proof is given in Appendix~\ref{app:same_povm_erasure_imprint}.
The idea is simple.
Proposition~\ref{prop:regular_imprint_arbitrary_dimension} provides a regular quantum imprint with POVM
$\{E_x\}_{x\in X}$.
For a fixed POVM, however, the Kraus operators are not uniquely determined: their unitary parts can be changed
without affecting $M_x^\dagger M_x=E_x$.
These unitary degrees of freedom leave all first measurement outcome probabilities unchanged while modifying
the postmeasurement states.
They can therefore be chosen so that the common-eigenvector condition in
Proposition~\ref{prop:regular_erasure} is satisfied, yielding an information-erasure-inducing realization of
the same POVM.

At first sight, one might expect the distinction between information erasure and quantum imprint to be closely
related to how strongly the first measurement acquires information about the input state.
However, the results above show that this intuition is insufficient.
Information erasure and quantum imprint can both occur within regular instruments, and even quantum
instruments implementing exactly the same POVM can exhibit opposite behaviors under postselection, as shown in
Theorem~\ref{thm:same_povm_erasure_imprint}.

Thus, the distinction cannot be reduced to a simple difference between projective and regular measurements,
nor can it be determined solely from the informativeness of the associated POVM.
In particular, information erasure and quantum imprint reflect a structure of the full quantum instrument,
including the postmeasurement state transformation, that is not captured by the POVM alone.

Importantly, this distinction is not merely taxonomic.
In the following sections, we show that the information-erasure structure can be directly exploited for
concrete operational tasks.

\section{Applications to SPAM noise separation}
In the previous section, we established the nontrivial structure of information erasure and quantum imprint.
We now turn to an operational consequence of information erasure in the presence of small experimental
imperfections.
In particular, we investigate how perturbations in state preparation, the measurement instrument, and
postselection appear in the posterior distribution around an ideal information-erasure point.

This problem is closely related to the characterization of state-preparation-and-measurement (SPAM) errors.
In realistic quantum devices, imperfections in state preparation and measurement generally occur
simultaneously, so that their contributions are mixed in the observed statistics.
Consequently, the observed statistics alone do not generally reveal whether a deviation originates from state
preparation or from measurement, making their separate characterization intrinsically difficult.
Various approaches have been proposed to address this ambiguity by introducing additional information or
structural assumptions, including self-consistent characterization and related methods~\cite{Nielsen2021GST,Cattaneo2023SelfConsistentQMT,Jayakumar2024Universal,Lin2021IndependentSPAM,Yu2025SeparateSPAM,Stark2014,JacksonVanEnk2015}.

Here, we show that information erasure provides a different route to this problem.
If a postselection realizing information erasure can be implemented with sufficient reliability, in the sense
that its imperfections do not affect the posterior distribution at first order, the first-order contribution
of state-preparation error to the posterior distribution vanishes, whereas visible perturbations of the
measurement instrument remain.
Thus, the problem of separating state-preparation and measurement errors can be reformulated as the problem of
realizing a reliable postselection under which the preparation-side contribution is suppressed.
We first establish this separation by analyzing perturbations around an ideal information-erasure point.

\subsection{Erasure-Induced First-Order Noise Separation}
\label{sec:first_order_separation}

We focus on the single-Kraus instruments considered in Sec.~\ref{sec:structural_results}.
Let $\boldsymbol{M}=\{M_x\}_{x\in X}$ be a single-Kraus instrument, and let $P_{y^\ast}$ be an admissible
postselection effect such that $(\boldsymbol{M},P_{y^\ast})$ realizes information erasure.
Define $F_x:=M_x^\dagger P_{y^\ast}M_x$ and $F:=\sum_zF_z$.
By Definition~\ref{dfn:information_erasure_inducing_instrument}, there exists a full-support probability
distribution $q=(q_x)_{x\in X}$ satisfying
\begin{align}
    F_x = q_xF,\qquad \forall x\in X.
    \label{eq:ideal_erasure_spam}
\end{align}

Now, we introduce a common perturbative scale $\epsilon$ and model the prepared state, measurement instrument,
and postselection effect as
\begin{align}
    \widetilde{\rho} &= \rho+\delta\rho+O(\epsilon^2), \notag \\
    \widetilde{M}_x &= M_x+\delta M_x+O(\epsilon^2), \notag \\
    \widetilde{P}_{y^\ast} &= P_{y^\ast}+\delta P_{y^\ast}+O(\epsilon^2).
    \label{eq:spam_noise_model}
\end{align}
Here, $\delta\rho$, $\delta M_x$, and $\delta P_{y^\ast}$ are first-order perturbations of order
$O(\epsilon)$.
The measurement perturbation is assumed to preserve the single-Kraus structure of each outcome branch.
Moreover, since the perturbed operators must continue to define a quantum instrument, the completeness
relation requires, to first order,
\begin{align}
    \sum_{x\in X} \left(\delta M_x^\dagger M_x + M_x^\dagger\delta M_x\right) = 0.
    \label{eq:first_order_completeness}
\end{align}

For the instrument perturbation, define
\begin{align}
    \delta F_x &:= \delta M_x^\dagger P_{y^\ast}M_x + M_x^\dagger P_{y^\ast}\delta M_x, \notag \\
    \delta F &:= \sum_z\delta F_z,\qquad R_x := \delta F_x-q_x\delta F.
    \label{eq:instrument_residual_operator}
\end{align}
Similarly, for the postselection perturbation, define
\begin{align}
    \delta F_x^{P} &:= M_x^\dagger\delta P_{y^\ast}M_x, \notag \\
    \delta F^{P} &:= \sum_z\delta F_z^{P},\qquad R_x^{P} := \delta F_x^{P}-q_x\delta F^{P}.
    \label{eq:postselection_residual_operator}
\end{align}
With these definitions, we obtain the following result of first-order noise-separation.

\begin{thm}[First-order SPAM error separation]
    \label{thm:first_order_spam_separation}
    Under the noise model in Eq.~\eqref{eq:spam_noise_model}, suppose that the postselection perturbation
    satisfies
    \begin{align}
        R_x^{P} = 0,\qquad \forall x\in X.
        \label{eq:postselection_kernel_condition}
    \end{align}
    Then the implemented posterior distribution satisfies
    \begin{align}
        \widetilde{p}\bigl(x|y^\ast,\widetilde{\rho}\bigr) = q_x+ \frac{\operatorname{Tr}(\rho R_x)}{\operatorname{Tr}(\rho F)} +O(\epsilon^2).
        \label{eq:first_order_separation}
    \end{align}
    Hence, the first-order deviation contains neither the unknown preparation perturbation $\delta\rho$ nor
    the postselection perturbation $\delta P_{y^\ast}$, and only the contribution of the instrument
    perturbation encoded in $R_x$ remains.
\end{thm}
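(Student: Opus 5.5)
The proof is a direct first-order expansion of the ratio in Eq.~\eqref{eq:postselected_distribution}. Write the implemented joint weight as $\widetilde{F}_x:=\widetilde{M}_x^\dagger\widetilde{P}_{y^\ast}\widetilde{M}_x$. Multiplying out the perturbations in Eq.~\eqref{eq:spam_noise_model} and discarding terms of order $\epsilon^2$ gives
\begin{align}
    \widetilde{F}_x = F_x+\delta F_x+\delta F_x^{P}+O(\epsilon^2).
\end{align}
The implemented posterior is then
\begin{align}
    \widetilde{p}\bigl(x|y^\ast,\widetilde{\rho}\bigr)=\frac{\operatorname{Tr}(\widetilde{\rho}\widetilde{F}_x)}{\sum_z\operatorname{Tr}(\widetilde{\rho}\widetilde{F}_z)}=\frac{N_x}{D}.
\end{align}
Here $N_x=\operatorname{Tr}(\rho F_x)+\operatorname{Tr}(\delta\rho F_x)+\operatorname{Tr}(\rho\,\delta F_x)+\operatorname{Tr}(\rho\,\delta F_x^{P})+O(\epsilon^2)$. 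The denominator $D$ has the same expansion with $F_x,\delta F_x,\delta F_x^P$ replaced by $F,\delta F,\delta F^P$.

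The key step is to expand the quotient using the standard rule
\begin{align}
    \frac{a+\alpha}{b+\beta}=\frac{a}{b}+\frac{\alpha b-a\beta}{b^2}+O(\epsilon^2),
\end{align}
which requires $b=\operatorname{Tr}(\rho F)>0$. This positivity is implicit in the posterior being defined at the ideal point, and I would state it as a standing assumption. At zeroth order, Eq.~\eqref{eq:ideal_erasure_spam} gives $a/b=q_x$. The first-order numerator $\alpha b-a\beta$ can then be grouped by perturbation type. Each group has the form $\operatorname{Tr}(\rho F)\operatorname{Tr}(\cdot\, G_x)-q_x\operatorname{Tr}(\rho F)\operatorname{Tr}(\cdot\, G)$, where the relation $a=q_x b$ has been used.

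There are three groups.

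\textbf{Preparation.} Here $\alpha=\operatorname{Tr}(\delta\rho F_x)=q_x\operatorname{Tr}(\delta\rho F)$ and $\beta=\operatorname{Tr}(\delta\rho F)$, so $\alpha b-a\beta=q_x\operatorname{Tr}(\delta\rho F)\,b-q_x b\operatorname{Tr}(\delta\rho F)=0$. This exact cancellation is the heart of the theorem. It follows from the operator identity $F_x=q_xF$, not merely from equality of probabilities for one particular state.

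\textbf{Postselection.} This contribution is $b\operatorname{Tr}(\rho R_x^{P})/b^2$, which vanishes by Eq.~\eqref{eq:postselection_kernel_condition}.

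\textbf{Instrument.} This contribution is $\operatorname{Tr}(\rho\,\delta F_x)-q_x\operatorname{Tr}(\rho\,\delta F)=\operatorname{Tr}(\rho R_x)$, divided by $\operatorname{Tr}(\rho F)$. That is exactly the stated correction.

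Cross terms such as $\operatorname{Tr}(\delta\rho\,\delta F_x)$ and the $O(\epsilon^2)$ pieces of $\widetilde{\rho}$, $\widetilde{M}_x$ and $\widetilde{P}_{y^\ast}$ are all absorbed into $O(\epsilon^2)$.

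The only potentially delicate point is justifying the quotient expansion uniformly, i.e.\ ensuring the denominator stays bounded away from zero. Since everything is finite-dimensional and $\operatorname{Tr}(\rho F)>0$, this is routine for $\epsilon$ small enough. I would note explicitly that the completeness constraint, Eq.~\eqref{eq:first_order_completeness}, is not needed for the expansion itself; it only restricts which $\delta M_x$ are admissible. Thus the main conceptual content reduces to the cancellation of the $\delta\rho$ terms.
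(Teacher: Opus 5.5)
Your proposal is correct and follows essentially the same route as the paper's proof. Both expand $\widetilde{F}_x$ and the numerator and denominator to first order, apply the first-order quotient rule, cancel the $\delta\rho$ term using the operator identity $F_x=q_xF$, and drop the postselection term via $R_x^P=0$; your explicit remark that $\operatorname{Tr}(\rho F)>0$ is needed for the expansion is a small clarification that the paper leaves implicit.
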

The cancellation of the state-preparation error follows directly from the information-erasure condition
$F_x=q_xF$, since
\begin{align}
    \operatorname{Tr}\left[\delta\rho\left(F_x-q_xF\right)\right] = 0.
\end{align}
By contrast, the instrument perturbation does not generally cancel and remains through $R_x$. A complete
derivation is given in Appendix~\ref{app:first_order_spam_separation}.

The simplest case when $R^P_x=0$ is $\delta P_{y^\ast}=0$.
Thus, if the postselection effect realizing information erasure can be implemented reliably,
Theorem~\ref{thm:first_order_spam_separation} shows that state-preparation and measurement errors can be
separated to first order without precise knowledge of the actually prepared state.
Importantly, however, precise implementation of the postselection effect is not necessary.
Nonzero perturbations $\delta P_{y^\ast}$ may also satisfy Eq.~\eqref{eq:postselection_kernel_condition}.
The set of such perturbations is the kernel of the real-linear map
\begin{align}
    T_P:\delta P_{y^\ast} \longmapsto \{R_x^P\}_{x\in X}.
    \label{eq:postselection_noise_map}
\end{align}
This kernel can contain error directions beyond a uniform rescaling of the postselection effect.
In Appendix~\ref{app:minimal_spam_sim}, we construct a qubit example and show that a small unitary rotation
about the $y$ axis is invisible to first order.

For this first-order separation mechanism to have operational significance, three issues need to be addressed:
\begin{enumerate}
    \item Is the protocol applicable to a sufficiently broad class of measurements, despite requiring an
    information-erasure-inducing instrument?
    \item What information about measurement errors can be extracted from the posterior distribution?
    \item Can the observable signal be used to reduce the error of the measurement apparatus itself?
\end{enumerate}

The first issue is addressed in part by the structural results of Sec.~\ref{sec:structural_results}.
As shown in Theorem~\ref{thm:informationally_complete_erasure}, an information-erasure-inducing instrument can
implement an informationally complete POVM.
Furthermore, Theorem~\ref{thm:same_povm_erasure_imprint} shows that, for some POVMs, distinct
quantum-instrument realizations of exactly the same POVM can exhibit opposite information structures, with one
admitting information erasure and the other exhibiting quantum imprint.
Thus, requiring information erasure does not necessarily restrict the first measurement to POVMs with limited
state-discrimination capability, nor does it necessarily require changing the first-measurement outcome
statistics.
In particular, when the information-erasure-inducing instrument implements an informationally complete POVM,
the same instrument can, in principle, be used without postselection for full state tomography after
calibration.

In the next two subsections, we show that the posterior distribution determines the residual operators $R_x$,
which capture the components of the measurement error visible to the present protocol.
We then show how these residuals can be used for feedback calibration and derive a sufficient condition under
which reducing the observable residual decreases the first-order measurement-error norm.
As a consequence, we obtain a tighter state-uniform bound on the deviation of the implemented outcome
probabilities from the ideal ones.
\subsection{Observable Measurement Errors}
Theorem~\ref{thm:first_order_spam_separation} shows that the first-order deviation of the posterior
distribution is determined solely by the measurement residual $R_x$.
Rearranging Eq.~\eqref{eq:first_order_separation} gives
\begin{align}
    \operatorname{Tr}\!\left[\rho R_x\right] = \operatorname{Tr}(\rho F)\left[\widetilde{p}(x| y^\ast,\widetilde{\rho})-q_x\right]+O(\epsilon^2).
    \label{eq:Rx_reconstruction}
\end{align}
The quantities $\operatorname{Tr}(\rho F)$ and $q_x$ are determined by the ideal information-erasure pair and
the ideal quantum state $\rho$, whereas $\widetilde{p}(x| y^\ast,\widetilde{\rho})$ can be estimated
experimentally.
Therefore, by repeating the experiment with a sufficiently rich set of input states, each $R_x$ can be
reconstructed through a tomography-like procedure, provided that the second-order terms $O(\epsilon^2)$ are
negligible.
Since $R_x$ is defined in Eq.~\eqref{eq:instrument_residual_operator} in terms of the instrument perturbations
$\{\delta M_x\}_{x\in X}$, this reconstruction answers the second question posed above.

It is important, however, that the map from the underlying measurement perturbation
$\delta M=\{\delta M_x\}_{x\in X}$ to the experimentally accessible residuals $\{R_x\}_{x\in X}$ is not
necessarily injective.
Hence, reconstructing every $R_x$ does not in general amount to reconstructing the full measurement error
$\delta M$.

To make this distinction explicit, let
\begin{align}
    V &:= \left\{\delta M = \{\delta M_x\}_{x\in X} \;\middle|\; \sum_{x\in X} \left(\delta M_x^\dagger M_x + M_x^\dagger \delta M_x\right) = 0\right\}\notag, \\
    W &:= \bigoplus_{x\in X}\operatorname{Herm}(\mathcal{H}),
\end{align}
where both spaces are regarded as real vector spaces, and define the real-linear map
\begin{align}
    T:V\longrightarrow W, \qquad T(\delta M) := \{R_x\}_{x\in X}.
    \label{eq:T_map}
\end{align}
We equip $V$ and $W$ with the real Hilbert--Schmidt inner products
\begin{align}
    \langle \{A_x\}_x,\{B_x\}_x\rangle_V &:= \sum_x \operatorname{Re}\operatorname{Tr}(A_x^\dagger B_x), \\
    \langle \{C_x\}_x,\{D_x\}_x\rangle_W &:= \sum_x \operatorname{Tr}(C_xD_x).
\end{align}
The corresponding norms are induced by these inner products.

We then define
\begin{align}
    \mathcal{K} := \ker T, \qquad V_{\mathrm{vis}} := \mathcal{K}^\perp.
\end{align}
The kernel $\mathcal{K}$ consists of measurement perturbations that do not contribute to the residuals $R_x$
and are therefore invisible to the present protocol.

A simple and physically important example of such a kernel component is a common right-unitary error.
Consider
\begin{align}
    \widetilde{M}_x = M_xV(\epsilon), \qquad V(\epsilon) = e^{-i\epsilon H}, \qquad H = H^\dagger.
\end{align}
To first order,
\begin{align}
    \delta M_x = -i\epsilon M_xH,
\end{align}
and hence
\begin{align}
    \delta F_x = i\epsilon[H,F_x].
\end{align}
Using the ideal information-erasure condition $F_x=q_xF$, we obtain
\begin{align}
    \delta F_x = q_x i\epsilon[H,F], \qquad \delta F = i\epsilon[H,F],
\end{align}
and therefore
\begin{align}
    R_x = \delta F_x-q_x\delta F = 0
\end{align}
for every $x\in X$.
Thus, every common right-unitary perturbation belongs to $\ker T$.

A common unitary acting from the right of all measurement operators can equivalently be regarded as a unitary
acting on the input state before the measurement:
\begin{align}
    (M_xV)\rho(M_xV)^\dagger = M_x(V\rho V^\dagger)M_x^\dagger.
    \label{eq:right_unitary_preparation_equivalence}
\end{align}
Therefore, from the observed statistics alone, such a measurement error cannot be distinguished from a
coherent preparation error.
Since the present protocol is designed to eliminate preparation-side perturbations at first order, it is
necessarily insensitive to measurement-error components that are operationally equivalent to them.

This visible component is the part that can subsequently be used as a feedback signal for calibrating the
measurement apparatus.
\subsection{Feedback Calibration}
\label{sec:guaranteed_feedback}
We finally address the third question posed above: whether the observable residuals $R_x$ can be used as a
feedback signal to reduce the error of the measurement apparatus itself.

Let $\mathcal{F}_{\eta}$ be a smooth feedback map parametrized by $\eta\in\mathbb{R}^m$, with
$\mathcal{F}_0=\mathrm{id}$, and let $\mathcal{N}$ denote an unknown measurement-noise map.
We restrict ourselves to the perturbative regime in which both the measurement noise and the feedback
parameter $\eta$ are sufficiently small.
In what follows, $O(2)$ collectively denotes terms of second order or higher in these small quantities,
including $O(\epsilon^2)$, $O(\epsilon\|\eta\|)$, and $O(\|\eta\|^2)$.

The implemented instrument is written as
\begin{align}
    \widetilde{\boldsymbol{M}}(\eta) &:= \mathcal{N}\circ\mathcal{F}_\eta(\boldsymbol{M}) \notag \\
    &= \{M_x+\delta M_x(\eta)+O(2)\}_{x\in X},
    \label{eq:effective_feedback_perturbation}
\end{align}
where $\delta M(\eta)=\{\delta M_x(\eta)\}_{x\in X}\in V$ is the total first-order measurement perturbation.
Its observable component is characterized by
\begin{align}
    T(\delta M(\eta)) = \{R_x(\eta)\}_{x\in X}.
\end{align}
Therefore, the natural feedback cost is defined as
\begin{align}
    C(\eta) &:= \sum_{x\in X}\|R_x(\eta)\|_{\mathrm{HS}}^2 = \|T(\delta M(\eta))\|_W^2.
    \label{eq:feedback_cost}
\end{align}
However, reducing this cost does not necessarily imply a reduction in the norm of the underlying measurement
perturbation $\delta M(\eta)$.

Here, we introduce the quantities required to derive a sufficient condition under which a reduction of the
observable cost guarantees a reduction of the first-order measurement-error norm.
Since only the first-order action of the feedback is relevant in the perturbative regime, we expand the
feedback map around $\eta=0$ as
\begin{align}
    \mathcal{F}_\eta(\boldsymbol{M}) = \boldsymbol{M} + \sum_{j = 1}^{m}\eta_j B_j + O(\|\eta\|^2),
\end{align}
where
\begin{align}
    B_j := \left. \frac{\partial \mathcal{F}_\eta(\boldsymbol{M})}{\partial \eta_j}\right|_{\eta = 0} \in V.
\end{align}
Because $\mathcal{F}_\eta(\boldsymbol{M})$ is a quantum instrument for sufficiently small $\eta$, each $B_j$
satisfies the first-order completeness condition and hence belongs to $V$.
In this work, we restrict our attention to feedback directions that are visible through $T$, and therefore
assume
\begin{align}
    B_j\in V_{\mathrm{vis}}, \qquad j = 1,\ldots,m.
\end{align}
We then define the feedback-accessible subspace by
\begin{align}
    V_{\mathrm{fb}} := \operatorname{span}_{\mathbb{R}} \{B_1,\ldots,B_m\} \subseteq V_{\mathrm{vis}}.
    \label{eq:Vfb_definition}
\end{align}
We further define the noncontrollable visible subspace by
\begin{align}
    V_{\mathrm{nc}} := \left\{v\in V_{\mathrm{vis}} \,\middle|\, \langle T(v),T(w)\rangle_W = 0 \text{ for all }w\in V_{\mathrm{fb}}\right\}.
    \label{eq:Vnc_definition}
\end{align}

To quantify the response of the observable residuals along the controllable directions, define
\begin{align}
    \alpha &:= \min_{v\in V_{\mathrm{fb}}\setminus\{0\}} \frac{\|T(v)\|_W}{\|v\|_V}, \qquad \beta := \max_{v\in V_{\mathrm{fb}}\setminus\{0\}} \frac{\|T(v)\|_W}{\|v\|_V}.
    \label{eq:alpha_beta_feedback}
\end{align}

Importantly, $T$ is determined by the ideal information-erasure pair, while $V_{\mathrm{fb}}$ is determined by
the specified feedback map around the ideal instrument.
Therefore, $V_{\mathrm{nc}}$, $\alpha$, $\beta$, and the geometric conditions appearing below can, in
principle, be determined in advance without knowledge of the unknown noise $\mathcal{N}$.

With these quantities, we can now state a sufficient condition under which reducing the experimentally
accessible feedback cost guarantees a reduction of the underlying first-order measurement-error norm.
\begin{thm}[Feedback guarantee theorem]
    \label{thm:feedback_guarantee}

    Suppose that the feedback changes only the $V_{\mathrm{fb}}$ component of the first-order measurement
    perturbation.
    In addition, assume that either
    \begin{enumerate}
        \item $V_{\mathrm{nc}}=0$, or
        \item $V_{\mathrm{nc}}\perp V_{\mathrm{fb}}$ with respect to the inner product on $V$.
    \end{enumerate}

    If the final feedback parameter $\eta_{\mathrm{final}}$ satisfies
    \begin{align}
        C(\eta_{\mathrm{final}}) < \left(\frac{\alpha}{\beta}\right)^2 C(0),
    \end{align}
    then,
    \begin{align}
        \left\| \mathcal{N}\circ\mathcal{F}_{\eta_{\mathrm{final}}}(\boldsymbol{M}) - \boldsymbol{M}\right\|_V < \left\| \mathcal{N}(\boldsymbol{M}) - \boldsymbol{M}\right\|_V +O(2).
    \end{align}
    Thus, under the assumptions of the theorem, the feedback is guaranteed to reduce the Hilbert--Schmidt norm
    of the first-order Kraus-operator perturbation.
\end{thm}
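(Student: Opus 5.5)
The plan is to work entirely at first order: decompose the first-order perturbation $\delta M(\eta)$ into an invisible part, a feedback-controllable part and a noncontrollable visible part, and show that the cost condition forces the controllable part to shrink. By Eq.~\eqref{eq:effective_feedback_perturbation}, $\mathcal{N}\circ\mathcal{F}_\eta(\boldsymbol{M})-\boldsymbol{M}=\delta M(\eta)+O(2)$. Hence it suffices to prove $\|\delta M(\eta_{\mathrm{final}})\|_V<\|\delta M(0)\|_V$ and absorb the higher-order terms into $O(2)$. Note that $C(\eta)=\|T(\delta M(\eta))\|_W^2$ is defined directly through the first-order object, so no error terms enter the cost comparison.

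\emph{Step 1 (decomposition of $V$).} Since $\mathcal{K}=\ker T$ and $V_{\mathrm{vis}}=\mathcal{K}^\perp$, we have $V=\mathcal{K}\oplus V_{\mathrm{vis}}$ orthogonally, and $T|_{V_{\mathrm{vis}}}$ is injective. Define the auxiliary inner product $\langle v,w\rangle_T:=\langle T(v),T(w)\rangle_W$. This is a genuine inner product on $V_{\mathrm{vis}}$ because $T$ is injective there. By Eq.~\eqref{eq:Vnc_definition}, $V_{\mathrm{nc}}$ is exactly the $\langle\cdot,\cdot\rangle_T$-orthogonal complement of $V_{\mathrm{fb}}$ inside $V_{\mathrm{vis}}$, so $V_{\mathrm{vis}}=V_{\mathrm{fb}}\oplus V_{\mathrm{nc}}$. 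Accordingly, write
\begin{align}
\delta M(\eta)=k(\eta)+u(\eta)+n(\eta),\qquad k\in\mathcal{K},\ u\in V_{\mathrm{fb}},\ n\in V_{\mathrm{nc}} .
\end{align}
The hypothesis that the feedback changes only the $V_{\mathrm{fb}}$ component means $k(\eta)=k(0)=:k$ and $n(\eta)=n(0)=:n$. (I will read the hypothesis as referring to this $T$-orthogonal decomposition; making this reading explicit is part of the bookkeeping.)

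\emph{Step 2 (from cost to $\|u\|_V$).} By $T$-orthogonality, $C(\eta)=\|T(u(\eta))\|_W^2+\|T(n)\|_W^2$. Write $u_f:=u(\eta_{\mathrm{final}})$ and $u_0:=u(0)$. The assumption $C(\eta_{\mathrm{final}})<(\alpha/\beta)^2C(0)$, together with $\alpha\le\beta$, gives
\begin{align}
\|T(u_f)\|_W^2<\left(\tfrac{\alpha}{\beta}\right)^2\|T(u_0)\|_W^2+\left[\left(\tfrac{\alpha}{\beta}\right)^2-1\right]\|T(n)\|_W^2\le\left(\tfrac{\alpha}{\beta}\right)^2\|T(u_0)\|_W^2 .
\end{align}
Applying the definitions in Eq.~\eqref{eq:alpha_beta_feedback}, the left side is at least $\alpha^2\|u_f\|_V^2$ and the right side is at most $\alpha^2\|u_0\|_V^2$. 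Here $\alpha>0$ because $V_{\mathrm{fb}}\subseteq V_{\mathrm{vis}}$ and $T$ is injective there. (If $V_{\mathrm{fb}}=0$, then $C$ is constant and the hypothesis is never satisfied, so that case is vacuous.) Therefore $\|u_f\|_V<\|u_0\|_V$.

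\emph{Step 3 (back to the full $V$-norm).} Since $k\perp V_{\mathrm{vis}}$ in $V$, we get $\|\delta M(\eta)\|_V^2=\|k\|_V^2+\|u(\eta)+n\|_V^2$. In case 1, $n=0$, so the visible term is just $\|u(\eta)\|_V^2$. In case 2, $n\perp u(\eta)$ in the $V$ inner product, so the visible term equals $\|u(\eta)\|_V^2+\|n\|_V^2$. In either case, $\|u_f\|_V<\|u_0\|_V$ gives $\|\delta M(\eta_{\mathrm{final}})\|_V<\|\delta M(0)\|_V$, and the claimed inequality follows after restoring the $O(2)$ terms. Without one of the two assumptions, the cross term $2\langle u,n\rangle_V$ could grow while $\|u\|_V$ shrinks, which is why they are needed.

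I expect the main obstacle to be Step 1 rather than the inequalities. One has to recognize that the correct splitting of $V_{\mathrm{vis}}$ is orthogonal with respect to $\langle\cdot,\cdot\rangle_T$, not the $V$ inner product, and to justify that this splitting is a direct sum, which relies on the injectivity of $T$ on $V_{\mathrm{vis}}$. Step 2 is a routine sandwich between $\alpha$ and $\beta$. In Step 3 the only point needing care is the sign of the cross term $\langle u,n\rangle_V$, which is exactly what the two alternative assumptions control.
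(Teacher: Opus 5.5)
Your proposal is correct and follows the paper's proof essentially step by step. It uses the same decomposition $\delta M(\eta)=\delta M_{\mathcal{K}}+\delta M_{\mathrm{fb}}(\eta)+\delta M_{\mathrm{nc}}$ with fixed $\mathcal{K}$ and $V_{\mathrm{nc}}$ components, the same reduction of $C(\eta_{\mathrm{final}})<(\alpha/\beta)^2C(0)$ to the $V_{\mathrm{fb}}$ part sandwiched between $\alpha$ and $\beta$, and the same Pythagorean conclusion under either orthogonality assumption; the only cosmetic difference is that you get $V_{\mathrm{vis}}=V_{\mathrm{fb}}\oplus V_{\mathrm{nc}}$ immediately by reading $V_{\mathrm{nc}}$ as an orthogonal complement for the inner product $\langle T(\cdot),T(\cdot)\rangle_W$, whereas the paper proves it by explicitly projecting $T(w)$ onto $T(V_{\mathrm{fb}})$.
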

\noindent
The proof is given in Appendix~\ref{app:feedback_control}.

In particular, $V_{\mathrm{nc}}=0$ means that every error direction visible to the present protocol is
controllable by the available feedback.
Hence, if the feedback spans the entire visible error space, $V_{\mathrm{fb}}=V_{\mathrm{vis}}$, the first
condition of Theorem~\ref{thm:feedback_guarantee} is automatically satisfied.

The perturbation norm appearing in Theorem~\ref{thm:feedback_guarantee} also has a direct consequence for the
measurement statistics.
Specifically, a reduction of this norm tightens a state-uniform upper bound on the deviation of the
implemented outcome probabilities from the ideal ones, as shown in the following corollary.
\begin{cor}[Uniform improvement of the measurement-statistics error bound]
    \label{thm:uniform_measurement_improvement}

    Suppose that the feedback satisfies the assumptions of Theorem~\ref{thm:feedback_guarantee} and let
    $\eta_{\mathrm{final}}$ satisfy $C(\eta_{\mathrm{final}})<(\alpha/\beta)^2C(0)$.

    For an arbitrary single-Kraus instrument $\mathcal{A}=\{A_x\}_{x\in X}$, define
    $p_{\mathcal{A}}(x|\rho):=\operatorname{Tr}(\rho A_x^\dagger A_x)$.

    Then,
    \begin{align}
        & \sup_{\rho\in S(\mathcal{H})} \frac12\sum_{x\in X} \left| p_{\mathcal{N}\circ\mathcal{F}_{\eta_{\mathrm{final}}}(\boldsymbol{M})}(x|\rho) - p_{\boldsymbol{M}}(x|\rho)\right| \notag \\
        &\quad\leq \sqrt{|X|} \left\| \mathcal{N}\circ\mathcal{F}_{\eta_{\mathrm{final}}}(\boldsymbol{M}) - \boldsymbol{M}\right\|_V +O(2), \notag \\
        & \sup_{\rho\in S(\mathcal{H})} \frac12\sum_{x\in X} \left| p_{\mathcal{N}(\boldsymbol{M})}(x|\rho) - p_{\boldsymbol{M}}(x|\rho)\right| \notag \\
        &\quad\leq \sqrt{|X|} \left\| \mathcal{N}(\boldsymbol{M}) - \boldsymbol{M}\right\|_V +O(2).
        \label{eq:uniform_probability_bound_improvement}
    \end{align}

    Furthermore, Theorem~\ref{thm:feedback_guarantee} implies
    \begin{align}
        & \sqrt{|X|} \left\| \mathcal{N}\circ\mathcal{F}_{\eta_{\mathrm{final}}}(\boldsymbol{M}) - \boldsymbol{M}\right\|_V \notag \\
        &\quad < \sqrt{|X|} \left\| \mathcal{N}(\boldsymbol{M}) - \boldsymbol{M}\right\|_V +O(2).
    \end{align}
    Therefore, after feedback, the state-uniform upper bound on the deviation of the implemented measurement
    statistics from the ideal statistics is strictly smaller than the corresponding bound before feedback.
\end{cor}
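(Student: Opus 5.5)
The corollary has two ingredients: a first-order, state-uniform bound on the total-variation deviation of the outcome probabilities in terms of the $V$-norm of the Kraus perturbation, and the strict inequality already supplied by Theorem~\ref{thm:feedback_guarantee}. The second part is immediate: multiply the conclusion of Theorem~\ref{thm:feedback_guarantee} by $\sqrt{|X|}$, which preserves the strict inequality and turns $O(2)$ into $O(2)$. So the work lies in proving, for an arbitrary first-order perturbation $\widetilde{\boldsymbol M}=\{M_x+\delta M_x+O(2)\}$, the bound
\begin{align}
    \sup_\rho \frac12\sum_x\left|p_{\widetilde{\boldsymbol M}}(x|\rho)-p_{\boldsymbol M}(x|\rho)\right| \le \sqrt{|X|}\,\|\delta M\|_V+O(2),
\end{align}
and then applying it twice. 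The first application takes $\widetilde{\boldsymbol M}=\mathcal N\circ\mathcal F_{\eta_{\mathrm{final}}}(\boldsymbol M)$ with $\delta M=\delta M(\eta_{\mathrm{final}})$. The second takes $\widetilde{\boldsymbol M}=\mathcal N(\boldsymbol M)$, which corresponds to $\eta=0$. In both cases I identify $\|\widetilde{\boldsymbol M}-\boldsymbol M\|_V=\|\delta M\|_V+O(2)$.

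For the bound itself, I first expand $\widetilde M_x^\dagger\widetilde M_x-M_x^\dagger M_x=\delta M_x^\dagger M_x+M_x^\dagger\delta M_x+O(2)$. This gives
\begin{align}
    p_{\widetilde{\boldsymbol M}}(x|\rho)-p_{\boldsymbol M}(x|\rho)=2\operatorname{Re}\operatorname{Tr}(\rho M_x^\dagger\delta M_x)+O(2).
\end{align}
Next I bound each term. Writing $\rho M_x^\dagger=(\rho^{1/2})(\rho^{1/2}M_x^\dagger)$, I use $|\operatorname{Tr}(\rho M_x^\dagger \delta M_x)|\le \|M_x\rho^{1/2}\|_{\mathrm{HS}}\,\|\delta M_x\rho^{1/2}\|_{\mathrm{HS}}$, where the first factor equals $\sqrt{p_{\boldsymbol M}(x|\rho)}\le 1$. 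The second factor satisfies $\|\delta M_x\rho^{1/2}\|_{\mathrm{HS}}\le\|\delta M_x\|_{\mathrm{op}}\|\rho^{1/2}\|_{\mathrm{HS}}=\|\delta M_x\|_{\mathrm{op}}\le\|\delta M_x\|_{\mathrm{HS}}$. Summing over $x$ with the factor $\tfrac12\cdot 2=1$ gives $\sum_x\|\delta M_x\|_{\mathrm{HS}}$. Cauchy--Schwarz over $X$ then gives $\sum_x\|\delta M_x\|_{\mathrm{HS}}\le\sqrt{|X|}\,(\sum_x\|\delta M_x\|_{\mathrm{HS}}^2)^{1/2}=\sqrt{|X|}\,\|\delta M\|_V$. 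Every estimate is uniform in $\rho$, so the supremum passes through. The $O(2)$ remainder must also be uniform in $\rho$, which holds because the remainder is a sum of traces of $\rho$ against fixed operators with operator norm $O(2)$ on the compact state space.

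The only point needing care is the bookkeeping of $O(2)$. I must check that replacing $\|\widetilde{\boldsymbol M}-\boldsymbol M\|_V$ by $\|\delta M\|_V$ costs only second-order terms. This follows from the triangle inequality, since the difference of the two norms is at most the norm of the $O(2)$ remainder. I must also check that the strict inequality survives multiplication by $\sqrt{|X|}$, which it does because the factor is a positive constant. Beyond this the argument is routine, and I do not expect a genuine obstacle.
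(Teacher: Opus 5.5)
Your proof is correct, but it takes a different route from the paper: you linearize first, and the paper never expands at all.

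\textbf{The paper's route.} For any two single-Kraus instruments $\mathcal{A},\mathcal{M}$, it writes $A_x^\dagger A_x-M_x^\dagger M_x=(A_x-M_x)^\dagger A_x+M_x^\dagger(A_x-M_x)$. It then applies H\"older's inequality $|\operatorname{Tr}(\rho B)|\le\|B\|_\infty$ and uses $\|A_x\|_\infty,\|M_x\|_\infty\le 1$. This gives the exact bound $|p_{\mathcal{A}}(x|\rho)-p_{\mathcal{M}}(x|\rho)|\le 2\|A_x-M_x\|_{\mathrm{HS}}$. Cauchy--Schwarz over $X$ then yields $\sqrt{|X|}\,\|\mathcal{A}-\mathcal{M}\|_V$ with no remainder. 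So the $O(2)$ in the first two displayed inequalities is superfluous in the paper's argument. None of your bookkeeping is needed there: neither the uniformity of the remainder in $\rho$ nor the replacement of $\|\delta M\|_V$ by $\|\widetilde{\boldsymbol M}-\boldsymbol M\|_V$.

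\textbf{What your route buys.} It costs that bookkeeping, but it retains more structure. The factor $\|M_x\rho^{1/2}\|_{\mathrm{HS}}=\sqrt{p_{\boldsymbol M}(x|\rho)}$ that you discard by bounding it by $1$ is worth keeping. Applying Cauchy--Schwarz over $x$ together with $\sum_x p_{\boldsymbol M}(x|\rho)=1$ gives
\[
\frac12\sum_x\bigl|p_{\widetilde{\boldsymbol M}}(x|\rho)-p_{\boldsymbol M}(x|\rho)\bigr|\le\Bigl(\sum_x\operatorname{Tr}\bigl(\rho\,\delta M_x^\dagger\delta M_x\bigr)\Bigr)^{1/2}+O(2)\le\|\delta M\|_V+O(2).
\]
This is the same first-order, state-uniform bound without the $\sqrt{|X|}$ factor, and it still implies the corollary as stated.

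In both versions, the final strict comparison follows exactly as you say, by multiplying the conclusion of Theorem~\ref{thm:feedback_guarantee} by the positive constant $\sqrt{|X|}$.
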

\noindent
The proof is given in Appendix~\ref{app:uniform_measurement_improvement}.

Theorem~\ref{thm:feedback_guarantee} and Corollary~\ref{thm:uniform_measurement_improvement} answer the third
question posed above.
Taken together, these results show that information erasure has direct operational significance.

\section{Summary and Discussion}
In this work, we introduced information erasure and quantum imprint as two properties that classify quantum
instruments according to whether postselection can eliminate the dependence of an earlier measurement record
on the initial quantum state while retaining all outcome branches.
We showed that this classification has a nontrivial and counterintuitive structure.
In particular, information erasure is compatible with informational completeness, so that even an instrument
whose associated POVM can uniquely identify the input state may exhibit information erasure.
Moreover, two quantum instruments implementing exactly the same POVM, and therefore producing identical
outcome statistics before postselection, can exhibit opposite behaviors, with one admitting information
erasure and the other exhibiting quantum imprint.
These results show that the natural intuition that measurements providing more information about the initial
quantum state should be less likely to exhibit information erasure does not hold in general.

We further showed that information erasure has direct operational significance for first-order SPAM error
separation.
If a postselection realizing information erasure can be implemented with sufficient reliability, in the sense
that its imperfections do not affect the posterior distribution at first order, the first-order contribution
of state-preparation error vanishes, whereas visible first-order contributions of measurement error remain.
Thus, first-order SPAM error separation can be reformulated from the simultaneous characterization of state
preparation and measurement into the problem of realizing a sufficiently reliable postselection.

Taken together, these results show that the distinction between information erasure and quantum imprint is not
merely a formal classification of quantum instruments, but reflects a structure that is relevant both to
fundamental questions in quantum measurement and to the development of quantum technologies.
This perspective raises several open questions from both foundational and applied viewpoints.

From a foundational perspective, an important problem is to obtain a complete characterization of information
erasure and quantum imprint for more general quantum instruments.
Once such a classification is established, it will be important to clarify how it is related to existing
notions that quantify the information acquired by a measurement or its information-gathering capability.
Our results indicate that information erasure and quantum imprint reflect a structure that cannot be captured
solely by the information content of the associated POVM.
Understanding the relation between this structure and existing information-theoretic quantities is therefore
an important open problem.

From an applied perspective, it will be important to experimentally test the first-order SPAM-error-separation
and feedback-calibration schemes proposed in this work.
It is also necessary to systematically characterize the space of postselection perturbations satisfying
$R_x^P=0$, and to understand how its dimension and structure determine the robustness of the protocol against
experimental imperfections.
Furthermore, extending both the structural results and the SPAM-error-separation framework beyond the
single-Kraus setting to general quantum instruments with multiple Kraus operators for each outcome is another
important direction.

Clarifying these questions may allow information erasure and quantum imprint to develop beyond SPAM-error
separation into a more general framework for understanding how information is retained and erased in
sequential quantum measurements, and for exploiting this structure in the design and control of quantum
measurements.

\section*{Acknowledgements}
This study was supported by the Cross-ministerial Strategic Innovation Promotion Program (SIP) of the Cabinet
Office (No. 23836436). YI is supported by MEXT Quantum Leap Flagship Program (MEXT Q-LEAP) Grant No.
JPMXS0120319794, JST COI-NEXT Grant No. JPMJPF2014, JST CREST JPMJCR24I3, JST SPRING Grant No. JPMJSP2138 and
the $\Sigma$ Doctoral Futures Research Grant Program from The University of Osaka.
TS and YI thank the Yukawa Institute for Theoretical Physics at Kyoto University, where this work was initiated during the YITP-W-25-11.

\bibliography{references}

\appendix
\section{Proofs of the Structural Results}
\label{app:structural_proofs}

In this appendix, we provide the proofs of the structural results presented in
Sec.~\ref{sec:structural_results}.
\subsection{Proof of Lemma~\ref{lem:reference_branch_condition}}
\label{app:reference_branch}

We prove the equivalence between the original information-erasure condition and the reference-branch
representation.

\begin{proof}

    Suppose first that the instrument is information-erasure-inducing.
    Then there exist $P_{y^\ast}\in\operatorname{Adm}(\mathcal{I})$ and $q\in\Delta_X^\circ$ such that
    \begin{align}
        M_x^\dagger P_{y^\ast}M_x = q_x\sum_{z\in X}M_z^\dagger P_{y^\ast}M_z,\qquad\forall x\in X.
        \label{eq:appendix_original_erasure}
    \end{align}
    Fix any reference outcome $x^\ast\in X$.
    Since $q_{x^\ast}>0$, Eq.~\eqref{eq:appendix_original_erasure} for $x=x^\ast$ gives
    \begin{align}
        M_{x^\ast}^\dagger P_{y^\ast}M_{x^\ast} = q_{x^\ast}\sum_{z\in X}M_z^\dagger P_{y^\ast}M_z.
    \end{align}
    Therefore,
    \begin{align}
        M_x^\dagger P_{y^\ast}M_x = \frac{q_x}{q_{x^\ast}}M_{x^\ast}^\dagger P_{y^\ast}M_{x^\ast},\qquad\forall x\in X.
    \end{align}
    Defining $r_x:=q_x/q_{x^\ast}$, we have $r_x>0$ and $r_{x^\ast}=1$, and hence
    Eq.~\eqref{eq:reference_branch_condition} follows.
    Conversely, suppose that Eq.~\eqref{eq:reference_branch_condition} holds for some $x^\ast\in X$,
    $P_{y^\ast}\in\operatorname{Adm}(\mathcal{I})$, and positive numbers $\{r_x\}_{x\in X}$ with
    $r_{x^\ast}=1$.
    Summing over $x$ gives
    \begin{align}
        \sum_{z\in X}M_z^\dagger P_{y^\ast}M_z = \left(\sum_{z\in X}r_z\right)M_{x^\ast}^\dagger P_{y^\ast}M_{x^\ast}.
    \end{align}
    It follows that
    \begin{align}
        M_x^\dagger P_{y^\ast}M_x = \frac{r_x}{\sum_{z\in X}r_z}\sum_{z\in X}M_z^\dagger P_{y^\ast}M_z,\qquad\forall x\in X.
    \end{align}
    Thus, defining
    \begin{align}
        q_x := \frac{r_x}{\sum_{z\in X}r_z},
    \end{align}
    we obtain $q\in\Delta_X^\circ$ and recover the information-erasure condition.
    The corresponding posterior distribution is therefore $p(x|y^\ast)=q_x$.

\end{proof}

\subsection{Proof of Proposition~\ref{prop:regular_erasure}}
\label{app:regular_instruments}
We now prove Proposition~\ref{prop:regular_erasure}.

\begin{proof}

    Fix an arbitrary reference outcome $x^\ast\in X$.

    Suppose first that a rank-one postselection effect $P_{y^\ast}$ realizes information erasure.
    Since $P_{y^\ast}$ is rank one, write
    \begin{align}
        P_{y^\ast} = c\ketbra{v}{v},\qquad0<c\leq1,\qquad\braket{v}{v} = 1.
    \end{align}
    By Lemma~\ref{lem:reference_branch_condition}, there exist $r_x>0$ such that
    \begin{align}
        M_x^\dagger\ketbra{v}{v}M_x = r_xM_{x^\ast}^\dagger\ketbra{v}{v}M_{x^\ast}\qquad\forall x\in X.
        \label{eq:appendix_rank_one_proportionality}
    \end{align}

    Set $\ket{u_x}:=M_x^\dagger\ket{v}$.
    Since every $M_x$ is regular, $\ket{u_x}\neq0$ for all $x$.
    Equation~\eqref{eq:appendix_rank_one_proportionality} gives
    \begin{align}
        \ketbra{u_x}{u_x} = r_x\ketbra{u_{x^\ast}}{u_{x^\ast}}.
    \end{align}
    The two rank-one operators therefore have the same range, so there exists
    $\lambda_x\in\mathbb{C}\setminus\{0\}$ such that
    \begin{align}
        \ket{u_x} = \lambda_x\ket{u_{x^\ast}}.
    \end{align}
    Substituting this relation into the preceding equality gives $|\lambda_x|^2=r_x$.
    Hence
    \begin{align}
        M_x^\dagger\ket{v} = \lambda_xM_{x^\ast}^\dagger\ket{v},
    \end{align}
    and therefore
    \begin{align}
        \left(M_xM_{x^\ast}^{-1}\right)^\dagger\ket{v} = \lambda_x\ket{v}\qquad\forall x\in X.
        \label{eq:appendix_common_eigenvector}
    \end{align}
    Thus $\{(M_xM_{x^\ast}^{-1})^\dagger\}_{x\in X}$ have the common eigenvector $\ket{v}$.

    Conversely, suppose that $\{(M_xM_{x^\ast}^{-1})^\dagger\}_{x\in X}$ have a common eigenvector
    $\ket{v}\neq0$.
    Let $\lambda_x$ denote the corresponding eigenvalues:
    \begin{align}
        \left(M_xM_{x^\ast}^{-1}\right)^\dagger\ket{v} = \lambda_x\ket{v}\qquad\forall x\in X.
    \end{align}
    Normalize $\ket{v}$ and define $P_{y^\ast}:=c\ketbra{v}{v},\ 0<c\leq1$.

    Multiplying the common-eigenvector relation by $M_{x^\ast}^\dagger$ gives
    \begin{align}
        M_x^\dagger\ket{v} = \lambda_xM_{x^\ast}^\dagger\ket{v}.
    \end{align}
    Taking the corresponding outer products yields
    \begin{align}
        M_x^\dagger P_{y^\ast}M_x = |\lambda_x|^2M_{x^\ast}^\dagger P_{y^\ast}M_{x^\ast}\qquad\forall x\in X.
    \end{align}
    Since every operator $\left(M_xM_{x^\ast}^{-1}\right)^\dagger$ is regular, its eigenvalue $\lambda_x$ is
    nonzero for every $x\in X$.
    Moreover, since $M_{x^\ast}$ is regular, $M_{x^\ast}^\dagger\ket{v}\neq0$, and hence
    \begin{align}
        M_{x^\ast}^\dagger P_{y^\ast}M_{x^\ast}\neq0.
    \end{align}
    Therefore $M_x^\dagger P_{y^\ast}M_x\neq 0$ for all $x\in X$, which means
    $P_{y^\ast}\in\operatorname{Adm}(\mathcal{I})$.

    Since $|\lambda_x|^2>0$, Lemma~\ref{lem:reference_branch_condition} implies that $P_{y^\ast}$ realizes
    information erasure.

    Because $x^\ast$ was chosen arbitrarily, the argument holds.
\end{proof}

\subsection{Proof of Proposition~\ref{prop:full_rank_postselection_characterization}}
\label{app:full_rank_postselection}
We next prove Proposition~\ref{prop:full_rank_postselection_characterization}.

\begin{proof}

    Let $P_{y^\ast}>0$ be fixed.

    We first prove the necessity.
    Suppose that a regular instrument $\{M_x\}_{x\in X}$ realizes information erasure under $P_{y^\ast}$.
    By Lemma~\ref{lem:reference_branch_condition}, there exist $r_x>0$ such that
    \begin{align}
        M_x^\dagger P_{y^\ast}M_x = r_xM_{x^\ast}^\dagger P_{y^\ast}M_{x^\ast}\qquad\forall x\in X.
        \label{eq:appendix_full_rank_reference}
    \end{align}
    Set $D=M_{x^\ast}^\dagger P_{y^\ast}M_{x^\ast}$.
    Since $P_{y^\ast}$ and $M_{x^\ast}$ are regular, $D>0$.

    Equation~\eqref{eq:appendix_full_rank_reference} is equivalent to
    \begin{align}
        \left(P_{y^\ast}^{1/2}M_x\right)^\dagger\left(P_{y^\ast}^{1/2}M_x\right) = r_xD.
    \end{align}
    Since $P_{y^\ast}^{1/2}M_x$ is regular, its polar decomposition has a unitary polar factor.
    Hence, for each $x\in X$, there exists a unitary $U_x$ such that
    \begin{align}
        P_{y^\ast}^{1/2}M_x = \sqrt{r_x}\,U_xD^{1/2}.
        \label{eq:appendix_full_rank_polar}
    \end{align}
    Therefore,
    \begin{align}
        M_x = \sqrt{r_x}\,P_{y^\ast}^{-1/2}U_xD^{1/2}.
        \label{eq:appendix_full_rank_before_normalization}
    \end{align}

    Substituting Eq.~\eqref{eq:appendix_full_rank_before_normalization} into the completeness relation gives
    \begin{align}
        I &= \sum_{x\in X}M_x^\dagger M_x \notag \\
        &= D^{1/2}\left(\sum_{x\in X}r_xU_x^\dagger P_{y^\ast}^{-1}U_x\right)D^{1/2}.
        \label{eq:appendix_full_rank_completeness}
    \end{align}
    Since $D>0$, Eq.~\eqref{eq:appendix_full_rank_completeness} implies
    \begin{align}
        D^{-1} = \sum_{x\in X}r_xU_x^\dagger P_{y^\ast}^{-1}U_x.
    \end{align}
    Taking the unique positive square root gives
    \begin{align}
        D^{1/2} = \left(\sum_{x\in X}r_xU_x^\dagger P_{y^\ast}^{-1}U_x\right)^{-1/2}.
    \end{align}
    Substitution into Eq.~\eqref{eq:appendix_full_rank_before_normalization} yields
    \begin{align}
        M_x = \sqrt{r_x}\,P_{y^\ast}^{-1/2}U_x\left(\sum_{z\in X}r_zU_z^\dagger P_{y^\ast}^{-1}U_z\right)^{-1/2},
    \end{align}
    which is Eq.~\eqref{eq:full_rank_postselection_family}.

    We next prove the sufficiency.
    Let $r_x>0$ and let $U_x$ be unitary for every $x\in X$, and define $M_x$ by
    Eq.~\eqref{eq:full_rank_postselection_family}.
    Every factor appearing in $M_x$ is regular, so the instrument is regular.
    Moreover,
    \begin{align}
        \sum_{x\in X}M_x^\dagger M_x &= \left(\sum_{z\in X}r_zU_z^\dagger P_{y^\ast}^{-1}U_z\right)^{-1/2} \notag \\
        &\quad\times\left(\sum_{x\in X}r_xU_x^\dagger P_{y^\ast}^{-1}U_x\right)\left(\sum_{z\in X}r_zU_z^\dagger P_{y^\ast}^{-1}U_z\right)^{-1/2} \notag \\
        &= I.
    \end{align}
    Thus $\{M_x\}_{x\in X}$ is a quantum instrument.
    Furthermore,
    \begin{align}
        M_x^\dagger P_{y^\ast}M_x = r_x\left(\sum_{z\in X}r_zU_z^\dagger P_{y^\ast}^{-1}U_z\right)^{-1}.
        \label{eq:appendix_full_rank_branch}
    \end{align}
    Consequently,
    \begin{align}
        M_x^\dagger P_{y^\ast}M_x = \frac{r_x}{\sum_{z\in X}r_z}\sum_{z\in X}M_z^\dagger P_{y^\ast}M_z.
    \end{align}
    Therefore $P_{y^\ast}$ realizes information erasure with $q_x=r_x/\sum_zr_z$.

\end{proof}

\subsection{Proof of Theorem~\ref{thm:informationally_complete_erasure}}

\label{app:informationally_complete_erasure}

\begin{proof}

    Let
    \begin{align}
        A := P_{y^\ast}^{-1}, \qquad S := \sum_{z\in X}r_zU_z^\dagger A U_z .
    \end{align}
    Since $P_{y^\ast}>0$ and $r_z>0$, we have $S>0$.

    For the family in Eq.~\eqref{eq:full_rank_postselection_family}, the associated POVM effects are
    \begin{align}
        E_x = r_xS^{-1/2}U_x^\dagger A U_xS^{-1/2}.
        \label{eq:ic_erasure_effects}
    \end{align}
    Consider the real-linear map
    \begin{align}
        & \Phi_S: \operatorname{Herm}(\mathcal{H}) \longrightarrow \operatorname{Herm}(\mathcal{H}), \notag \\
        & \Phi_S(B) := S^{-1/2}BS^{-1/2}.
    \end{align}
    Since $r_x>0$ for every $x\in X$, multiplication by $r_x$ does not affect the real linear span.
    Thus,
    \begin{align}
        E_x = r_x \Phi_S (U_x^\dagger A U_x).
    \end{align}
    Therefore,
    \begin{align}
        \operatorname{span}_{\mathbb{R}}\{E_x\}_{x\in X} &= \operatorname{span}_{\mathbb{R}}\{\Phi_S (U_x^\dagger A U_x)\}_{x\in X} \notag \\
        &= \Phi_S\operatorname{span}_{\mathbb{R}}\{U_x^\dagger A U_x \}_{x\in X}.
    \end{align}
    Since $\Phi_S$ is invertible and
    $\Phi_S^{-1}(\operatorname{Herm}(\mathcal{H}))=\operatorname{Herm}(\mathcal{H})$,
    \begin{align}
        \operatorname{span}_{\mathbb{R}}\{E_x\}_{x\in X} = \operatorname{Herm}(\mathcal{H}) \notag \\
        \Longleftrightarrow\quad \operatorname{span}_{\mathbb{R}} \{U_x^\dagger A U_x\}_{x\in X} = \operatorname{Herm}(\mathcal{H}).
        \label{eq:ic_erasure_span_equivalence}
    \end{align}
    Thus, it suffices to show that there exists a choice of unitaries $\{U_x\}_{x\in X}$ such that
    \begin{align}
        \operatorname{span}_{\mathbb{R}} \{U_x^\dagger A U_x\}_{x\in X} = \operatorname{Herm}(\mathcal{H}).
    \end{align}

    We first prove the necessity.
    If $\{E_x\}_{x\in X}$ is informationally complete, then its effects must span the $d^2$-dimensional real
    vector space $\operatorname{Herm}(\mathcal{H})$.
    Hence
    \begin{align}
        |X|\geq d^2.
    \end{align}

    Suppose next that $P_{y^\ast}\propto I$.
    Then $A\propto I$, and hence
    \begin{align}
        U_x^\dagger A U_x = A \qquad \forall x\in X.
    \end{align}
    Thus the right-hand side of Eq.~\eqref{eq:ic_erasure_span_equivalence} is one-dimensional, and cannot
    equal $\operatorname{Herm}(\mathcal{H})$ for $d\geq2$.
    Therefore,
    \begin{align}
        P_{y^\ast}\not\propto I
    \end{align}
    is necessary.

    We now prove the sufficiency.
    Assume
    \begin{align}
        P_{y^\ast}\not\propto I, \qquad |X|\geq d^2.
    \end{align}
    Here, define
    \begin{align}
        \mathcal{W} := \operatorname{span}_{\mathbb{R}} \{U^\dagger A U\mid U\in U(d)\}\subseteq \operatorname{Herm}(\mathcal{H}).
    \end{align}
    First, we show that
    \begin{align}
        \mathcal{W} = \operatorname{Herm}(\mathcal{H}).
        \label{eq:unitary_orbit_full_span}
    \end{align}
    We then extract $d^2$ unitary conjugates from this orbit.

    Let $\mathcal{W}^\perp$ denote the orthogonal complement of $\mathcal{W}$ with respect to the
    Hilbert--Schmidt inner product, so that
    \begin{align}
        \operatorname{Herm}(\mathcal{H}) = \mathcal{W}\oplus\mathcal{W}^\perp.
    \end{align}
    We show that
    \begin{align}
        \mathcal{W}^\perp = \{0\}.
    \end{align}

    Take an arbitrary $B\in\mathcal{W}^\perp$.
    By definition,
    \begin{align}
        \operatorname{Tr}\!\left(BU^\dagger A U\right) = 0 \qquad \text{for every unitary }U.
    \end{align}
    Since $B$ is Hermitian, there exists a unitary $V$ such that
    \begin{align}
        V^\dagger B V = \sum_{i = 1}^d b_i\ketbra{i}.
    \end{align}
    Moreover, for every unitary $W$,
    \begin{align}
        \operatorname{Tr}\!\left[V^\dagger B V\,W^\dagger A W\right] &= \operatorname{Tr}\!\left[B\,V W^\dagger A W V^\dagger\right] \\
        &= \operatorname{Tr}\!\left[B\,U^\dagger A U\right] = 0,
    \end{align}
    where $U:=WV^\dagger$ is unitary.
    Hence, without loss of generality, we may work in an eigenbasis of $B$ and write
    \begin{align}
        B = \sum_{i = 1}^d b_i\ketbra{i}.
    \end{align}

    Fix arbitrary $i\neq j$.
    Since $A$ is Hermitian and non-scalar, $A$ has at least two distinct eigenvalues, which we denote by
    $\lambda$ and $\mu$.
    We may choose two unitary conjugates $A_{ij}$ and $A_{ji}$ of $A$ that are diagonal in the above basis and
    coincide on all diagonal entries except the $i$th and $j$th entries.
    We choose them such that $A_{ij}$ has $\lambda$ and $\mu$ at the $i$th and $j$th entries, respectively,
    whereas $A_{ji}$ has these two eigenvalues exchanged.
    Hence
    \begin{align}
        A_{ij}-A_{ji} = (\lambda-\mu) \left(\ketbra{i}-\ketbra{j}\right).
    \end{align}
    Since $A_{ij}-A_{ji} \in \mathcal{W}$ and by definition,
    \begin{align}
        0 = \operatorname{Tr}\!\left[B(A_{ij}-A_{ji})\right] = (\lambda-\mu)(b_i-b_j).
    \end{align}
    Since $\lambda\neq\mu$, we have
    \begin{align}
        b_i = b_j.
    \end{align}
    Because $i$ and $j$ are arbitrary, all eigenvalues of $B$ are equal, and therefore
    \begin{align}
        B = cI
    \end{align}
    for some $c\in\mathbb{R}$.
    Since $A\in\mathcal{W}$ and $B\in\mathcal{W}^\perp$,
    \begin{align}
        0 = \operatorname{Tr}(BA) = c\,\operatorname{Tr}(A).
    \end{align}
    Because $A>0$, we have $\operatorname{Tr}(A)>0$, and hence
    \begin{align}
        c = 0.
    \end{align}
    Thus every element of $\mathcal{W}^\perp$ is zero, so
    \begin{align}
        \mathcal{W}^\perp = \{0\}.
    \end{align}
    Since $\mathcal{W}$ is a subspace of the finite-dimensional inner-product space
    $\operatorname{Herm}(\mathcal{H})$, this implies Eq.~\eqref{eq:unitary_orbit_full_span}.

    Since
    \begin{align}
        \dim_{\mathbb{R}}\operatorname{Herm}(\mathcal{H}) = d^2,
    \end{align}
    there exist $d^2$ unitaries $U_1,\ldots,U_{d^2}$ such that
    \begin{align}
        \{U_x^\dagger A U_x\}_{x = 1}^{d^2}
    \end{align}
    is linearly independent over $\mathbb{R}$.
    If $|X|>d^2$, the remaining unitaries may be chosen arbitrarily.
    Therefore,
    \begin{align}
        \operatorname{span}_{\mathbb{R}} \{U_x^\dagger A U_x\}_{x\in X} = \operatorname{Herm}(\mathcal{H}).
    \end{align}
    Equation~\eqref{eq:ic_erasure_span_equivalence} then implies
    \begin{align}
        \operatorname{span}_{\mathbb{R}}\{E_x\}_{x\in X} = \operatorname{Herm}(\mathcal{H}),
    \end{align}
    so the associated POVM is informationally complete.

\end{proof}

\subsection{Proof of Proposition~\ref{prop:projective_quantum_imprint}}
\label{app:projective_imprint}

We next prove Proposition~\ref{prop:projective_quantum_imprint}.
\begin{proof}

    Suppose that there is a postselection operator $P_{y^\ast}$ that realizes information erasure.

    Since $\{M_x\}_{x\in X}$ is a single-Kraus instrument, $\sum_{z\in X}M_z^\dagger M_z=I$.
    For any fixed $x\in X$, we therefore have
    \begin{align}
        \sum_{z\in X\setminus\{x\}}M_z^\dagger M_z = I-M_x^\dagger M_x.
    \end{align}
    It follows that
    \begin{align}
        & \sum_{z\in X\setminus\{x\}}\left(M_z\sqrt{M_x^\dagger M_x}\right)^\dagger\left(M_z\sqrt{M_x^\dagger M_x}\right) \notag \\
        & \qquad = \sqrt{M_x^\dagger M_x}\left(I-M_x^\dagger M_x\right)\sqrt{M_x^\dagger M_x} \notag \\
        & \qquad = M_x^\dagger M_x-\left(M_x^\dagger M_x\right)^2 = 0.
        \label{eq:appendix_projective_positive_sum}
    \end{align}
    Every term in the first line of Eq.~\eqref{eq:appendix_projective_positive_sum} is positive semidefinite.
    Hence
    \begin{align}
        M_z\sqrt{M_x^\dagger M_x} = 0\qquad\forall z\in X\setminus\{x\}.
    \end{align}
    Since $M_x^\dagger M_x$ is a projection, $\sqrt{M_x^\dagger M_x}=M_x^\dagger M_x$.
    Taking the adjoint as well, we obtain
    \begin{align}
        M_zM_x^\dagger M_x = M_x^\dagger M_xM_z^\dagger = 0\qquad\forall z\in X\setminus\{x\}.
        \label{eq:appendix_projective_cross_branch}
    \end{align}

    We next derive two identities used below.
    Let $d_x=\operatorname{rank}M_x$ and consider a singular-value decomposition
    \begin{align}
        M_x = \sum_{j = 1}^{d_x}\sigma_j\ketbra{a_j}{b_j},\qquad\sigma_j>0,
        \label{eq:appendix_projective_svd}
    \end{align}
    where $\{\ket{a_j}\}_{j=1}^{d_x}$ and $\{\ket{b_j}\}_{j=1}^{d_x}$ are orthonormal families.
    Then
    \begin{align}
        M_x^\dagger M_x = \sum_{j = 1}^{d_x}\sigma_j^2\ketbra{b_j}.
    \end{align}
    Because $M_x^\dagger M_x$ is a projection, each nonzero eigenvalue $\sigma_j^2$ equals $1$.
    Hence $\sigma_j=1$ for every $j=1,\ldots,d_x$.
    Therefore,
    \begin{align}
        M_xM_x^\dagger M_x = M_x,\qquad M_x^\dagger M_xM_x^\dagger = M_x^\dagger.
        \label{eq:appendix_projective_partial_isometry}
    \end{align}
    By Lemma~\ref{lem:reference_branch_condition}, there exist $x^\ast\in X$ and positive numbers
    $\{r_z\}_{z\in X}$ satisfying $r_{x^\ast}=1$ and
    \begin{align}
        M_z^\dagger P_{y^\ast}M_z = r_zM_{x^\ast}^\dagger P_{y^\ast}M_{x^\ast}\qquad\forall z\in X.
        \label{eq:appendix_projective_reference}
    \end{align}

    Fix $x\in X$.
    We consider separately the cases $x=x^\ast$ and $x\neq x^\ast$.

    First, suppose that $x=x^\ast$.
    Since $|X|\geq2$, choose $z\in X\setminus\{x\}$.
    Equation~\eqref{eq:appendix_projective_reference} gives
    \begin{align}
        M_z^\dagger P_{y^\ast}M_z = r_zM_x^\dagger P_{y^\ast}M_x.
    \end{align}
    Multiplying this equality from the left by $M_x^\dagger M_x$, and using
    Eqs.~\eqref{eq:appendix_projective_cross_branch} and \eqref{eq:appendix_projective_partial_isometry},
    gives
    \begin{align}
        0 = r_zM_x^\dagger P_{y^\ast}M_x.
    \end{align}
    Since $r_z>0$,
    \begin{align}
        M_x^\dagger P_{y^\ast}M_x = 0.
        \label{eq:appendix_projective_reference_zero}
    \end{align}
    Equation~\eqref{eq:appendix_projective_reference} then implies
    \begin{align}
        M_z^\dagger P_{y^\ast}M_z = 0\qquad\forall z\in X.
        \label{eq:appendix_projective_all_zero_case_one}
    \end{align}

    Second, suppose that $x\neq x^\ast$.
    Taking $z=x$ in Eq.~\eqref{eq:appendix_projective_reference}, we have
    \begin{align}
        M_x^\dagger P_{y^\ast}M_x = r_xM_{x^\ast}^\dagger P_{y^\ast}M_{x^\ast}.
    \end{align}
    Multiplication from the left by $M_x^\dagger M_x$, together with
    Eqs.~\eqref{eq:appendix_projective_cross_branch} and \eqref{eq:appendix_projective_partial_isometry},
    gives
    \begin{align}
        M_x^\dagger P_{y^\ast}M_x = 0.
    \end{align}
    Since $r_x>0$, it follows that $M_{x^\ast}^\dagger P_{y^\ast}M_{x^\ast}=0$.
    Equation~\eqref{eq:appendix_projective_reference} therefore again gives
    \begin{align}
        M_z^\dagger P_{y^\ast}M_z = 0\qquad\forall z\in X.
        \label{eq:appendix_projective_all_zero_case_two}
    \end{align}

    Thus, in either case,
    \begin{align}
        \sum_{z\in X}M_z^\dagger P_{y^\ast}M_z = 0.
    \end{align}
    This contradicts $P_{y^\ast}\in\operatorname{Adm}(\mathcal{I})$.

    Therefore, projective measurements are quantum imprints.

\end{proof}

\subsection{Proof of Proposition~\ref{prop:regular_imprint_arbitrary_dimension}}
\label{app:regular_imprint_arbitrary_dimension}
We now prove Proposition~\ref{prop:regular_imprint_arbitrary_dimension}.

\begin{proof}

    First, we verify that $\{M_x\}_{x=0}^2$ is regular.
    Since $A_0=I$, $A_1=I-aS_d$, and $A_2=I+bS_d^\dagger$ are triangular matrices with all diagonal entries
    equal to $1$, each $A_x$ is invertible.
    Hence $\left(\sum_{z=0}^{2}A_z^\dagger A_z\right)^{-1/2}$ is also invertible.
    Therefore, each $M_x$ is invertible, and the instrument $\{M_x\}_{x=0}^2$ is regular.

    Suppose that information erasure is realized.
    Then we get the following from Lemma~\ref{lem:reference_branch_condition}:
    \begin{align}
        A_i^\dagger P_{y^\ast}A_i = r_i P_{y^\ast},\quad i = 1,2.
    \end{align}
    Let us first suppose that the nonzero postselection effect $P_{y^\ast}$ is rank-deficient and take
    $0\neq \ket{v}\in \ker{P}_{y^\ast}$.
    From the above equation, we have
    \begin{align}
        \bra{v}A_i^\dagger P_{y^\ast}A_i\ket{v} = r_i \bra{v}P_{y^\ast}\ket{v} = 0.
    \end{align}
    Since $P_{y^\ast}\geq 0$,
    \begin{align}
        \bra{v}A_i^\dagger P_{y^\ast}A_i\ket{v} = \| \sqrt{P}_{y^\ast}A_i\ket{v} \|^2 = 0, \\
        \implies \sqrt{P}_{y^\ast}A_i\ket{v} = 0 \implies P_{y^\ast}A_i\ket{v} = 0.
    \end{align}
    By substituting this into the original form of $A_i$, we have
    \begin{align}
        P_{y^\ast}(I-a S_d)\ket{v} = 0 \implies P_{y^\ast}S_d\ket{v} = 0,\label{Eq.kernel} \\
        P_{y^\ast}(I+b S_d^\dagger)\ket{v} = 0 \implies P_{y^\ast}S_d^\dagger\ket{v} = 0.
    \end{align}
    Here, let us decompose $\ket{v}$ in the basis $\{\ket{j}\}_{j=1}^d$ as
    \begin{align}
        \ket{v} = \sum_{j = 1}^d c_j\ket{j},
    \end{align}
    and let $m$ be the largest index such that $c_m\neq0$:
    \begin{align}
        \ket{v} = \sum_{j = 1}^m c_j\ket{j}.
    \end{align}
    Then since $S_d\ket{j}=\ket{j-1}$, we have
    \begin{align}
        S_d^{m-1}\ket{v} = c_m\ket{1}.
        \label{Eq.min_condition}
    \end{align}
    Since Eq.~\eqref{Eq.kernel} implies that $S_d\ket{w}\in\ker P_{y^\ast}$ whenever
    $\ket{w}\in\ker P_{y^\ast}$, iterating Eq.~\eqref{Eq.kernel} $m-1$ times gives
    \begin{align}
        P_{y^\ast}S_d^{m-1}\ket{v} = 0.
    \end{align}
    Therefore, by taking $P_{y^\ast}$ on both sides of Eq.~\eqref{Eq.min_condition}, we have
    \begin{align}
        P_{y^\ast}\ket{1} = 0.
    \end{align}

    Hence $\ket{1}\in\ker P_{y^\ast}$.
    By taking $\ket{v}$ as $\ket{1}$ we have
    \begin{align}
        S_d^{\dagger j-1}\ket{1} = \ket{j},\quad P_{y^\ast}S_d^{\dagger j-1}\ket{1} = 0.
    \end{align}
    Therefore, we have
    \begin{align}
        \ket{j}\in\ker P_{y^\ast},\qquad j = 1,\ldots,d.
    \end{align}
    Thus,
    \begin{align}
        \ker P_{y^\ast} = \mathcal{H},
    \end{align}
    which implies
    \begin{align}
        P_{y^\ast} = 0.
    \end{align}
    This contradicts the assumption that $P_{y^\ast}\neq0$.
    Thus no nonzero rank-deficient postselection effect can realize information erasure.

    Let us now consider the case where $P_{y^\ast}$ is full-rank.
    As we show in Proposition~\ref{prop:full_rank_postselection_characterization}, the necessary and
    sufficient condition for information erasure of $\{M_x\}_{x\in X}$ is below:
    \begin{align}
        M_x = \sqrt{r_x}\,P_{y^\ast}^{-1/2}U_x\left(\sum_{z\in X}r_z U_z^\dagger P_{y^\ast}^{-1}U_z\right)^{-1/2},
    \end{align}
    for arbitrary unitary operators $\{U_x\}_{x\in X}$ and positive numbers $\{r_x\}_{x\in X}$.
    Here, from the definition of $A_x$, we have
    \begin{align}
        M_xM_0^{-1} = A_x.
    \end{align}
    On the other hand, the characterization in Proposition~\ref{prop:full_rank_postselection_characterization}
    gives
    \begin{align}
        M_xM_0^{-1} = \sqrt{\frac{r_x}{r_0}}\,P_{y^\ast}^{-1/2}U_xU_0^\dagger P_{y^\ast}^{1/2}.
    \end{align}
    Therefore,
    \begin{align}
        P_{y^\ast}^{1/2}A_xP_{y^\ast}^{-1/2} = \sqrt{\frac{r_x}{r_0}}\,U_xU_0^\dagger.
    \end{align}
    Since $U_xU_0^\dagger$ is unitary and $P_{y^\ast}>0$, $A_x$ must be at least diagonalizable.

    Since $A_1$ is upper triangular with all diagonal entries equal to $1$, its only eigenvalue is $1$.
    If $A_1$ were diagonalizable, it would therefore be similar to the identity matrix, which would imply
    $A_1=I$.
    However, $A_1=I-aS_d\neq I$ since $a>0$ and $S_d\neq0$.
    Hence $A_1$ is not diagonalizable.

    Hence, the quantum instrument given in this proposition is a quantum imprint.

\end{proof}

\subsection{Proof of Theorem~\ref{thm:same_povm_erasure_imprint}}
\label{app:same_povm_erasure_imprint}

We finally prove Theorem~\ref{thm:same_povm_erasure_imprint} using
Propositions~\ref{prop:regular_erasure} and \ref{prop:regular_imprint_arbitrary_dimension}.

\begin{proof}

    Let $\{M_x^{\mathrm{imp}}\}_{x=0}^2$ be the regular quantum imprint provided by
    Proposition~\ref{prop:regular_imprint_arbitrary_dimension}, and define a POVM by
    \begin{align}
        E_x = (M_x^{\mathrm{imp}})^\dagger M_x^{\mathrm{imp}}\qquad x\in\{0,1,2\}.
    \end{align}
    Every $M_x^{\mathrm{imp}}$ is regular, so $E_x>0$.

    Fix a unit vector $\ket{v}$.
    For each $x\in\{0,1,2\}$, define
    \begin{align}
        \ket{u_x} = \frac{E_x^{-1/2}\ket{v}}{\|E_x^{-1/2}\ket{v}\|}.
        \label{eq:appendix_same_povm_ux}
    \end{align}
    Both $\ket{u_x}$ and $\ket{v}$ are unit vectors, so there exists a unitary $U_x$ satisfying
    $U_x^\dagger\ket{v}=\ket{u_x}$.
    Define
    \begin{align}
        M_x^{\mathrm{er}} = U_xE_x^{1/2}.
        \label{eq:appendix_same_povm_erasure_realization}
    \end{align}
    Then since $E_x >0$, every $M_x^{\mathrm{er}}$ is invertible and
    \begin{align}
        (M_x^{\mathrm{er}})^\dagger M_x^{\mathrm{er}} = E_x.
    \end{align}
    Therefore, the two instruments $\{M_x^{\mathrm{imp}}\}_{x=0}^2,\{M_x^{\mathrm{er}}\}_{x=0}^2$ realize
    exactly the same POVM.

    Next, we show that $\{M_x^{\mathrm{er}}\}_{x=0}^2$ is an information-erasure-inducing instrument.
    Using Eq.~\eqref{eq:appendix_same_povm_ux}, we obtain
    \begin{align}
        (M_x^{\mathrm{er}})^\dagger\ket{v} &= E_x^{1/2}U_x^\dagger\ket{v} = E_x^{1/2}\ket{u_x} \notag \\
        &= \frac{\ket{v}}{\|E_x^{-1/2}\ket{v}\|}.
        \label{eq:appendix_same_povm_action}
    \end{align}
    Taking $x^\ast=0$ and using Eq.~\eqref{eq:appendix_same_povm_action} gives
    \begin{align}
        \left(M_x^{\mathrm{er}}(M_0^{\mathrm{er}})^{-1}\right)^\dagger\ket{v} = \frac{\|E_0^{-1/2}\ket{v}\|}{\|E_x^{-1/2}\ket{v}\|}\ket{v}\qquad x\in\{0,1,2\}.
        \label{eq:appendix_same_povm_common_eigenvector}
    \end{align}
    Thus the operators $\{(M_x^{\mathrm{er}}(M_0^{\mathrm{er}})^{-1})^\dagger\}_{x=0}^2$ have the common
    eigenvector $\ket{v}$.
    By Proposition~\ref{prop:regular_erasure}, $\{M_x^{\mathrm{er}}\}_{x=0}^2$ is an
    information-erasure-inducing instrument.

    On the other hand, $\{M_x^{\mathrm{imp}}\}_{x=0}^2$ is a quantum imprint by
    Proposition~\ref{prop:regular_imprint_arbitrary_dimension}.
    Therefore the POVM $\{E_x\}_{x=0}^2$ admits two regular single-Kraus realizations with opposite
    information structures.

\end{proof}

\section{Proof of Theorem~\ref{thm:first_order_spam_separation}}

\label{app:first_order_spam_separation}
\begin{proof}
    For the measurement pair $(\boldsymbol{M},P_{y^\ast})$, define
    \begin{align}
        F_x := M_x^\dagger P_{y^\ast}M_x,\qquad F := \sum_xF_x.
    \end{align}
    The ideal posterior distribution is then given by
    \begin{align}
        p(x| y^\ast,\rho) = \frac{\operatorname{Tr}(\rho F_x)}{\operatorname{Tr}(\rho F)}.
    \end{align}
    Under the noise model in Eq.~\eqref{eq:spam_noise_model},
    \begin{align}
        \widetilde{M}_x &= M_x+\delta M_x+O(\epsilon^2), \\
        \widetilde{P}_{y^\ast} &= P_{y^\ast}+\delta P_{y^\ast}+O(\epsilon^2).
    \end{align}
    Therefore, the operator including the postselection effect becomes
    \begin{align}
        \widetilde{F}_x &:= \widetilde{M}_x^\dagger\widetilde{P}_{y^\ast}\widetilde{M}_x \notag \\
        &= \left(M_x^\dagger+\delta M_x^\dagger\right) \left(P_{y^\ast}+\delta P_{y^\ast}\right) \left(M_x+\delta M_x\right)+O(\epsilon^2) \notag \\
        &= M_x^\dagger P_{y^\ast}M_x +\delta M_x^\dagger P_{y^\ast}M_x +M_x^\dagger P_{y^\ast}\delta M_x \notag \\
        &\quad +M_x^\dagger\delta P_{y^\ast}M_x +O(\epsilon^2) \notag \\
        &= F_x+\delta F_x+\delta F_x^P+O(\epsilon^2),
    \end{align}
    where
    \begin{align}
        \delta F_x &:= \delta M_x^\dagger P_{y^\ast}M_x +M_x^\dagger P_{y^\ast}\delta M_x, \\
        \delta F_x^P &:= M_x^\dagger\delta P_{y^\ast}M_x.
    \end{align}
    Similarly,
    \begin{align}
        \widetilde{F} := \sum_x\widetilde{F}_x = F+\delta F+\delta F^P+O(\epsilon^2),
    \end{align}
    where
    \begin{align}
        \delta F := \sum_x\delta F_x,\qquad \delta F^P := \sum_x\delta F_x^P.
    \end{align}
    The noisy input state is written as
    \begin{align}
        \widetilde{\rho} = \rho+\delta\rho+O(\epsilon^2).
    \end{align}
    Hence, the implemented posterior distribution is
    \begin{align}
        \widetilde{p}(x| y^\ast,\widetilde{\rho}) = \frac{\operatorname{Tr}(\widetilde{\rho}\,\widetilde{F}_x)}{\operatorname{Tr}(\widetilde{\rho}\,\widetilde{F})}.
    \end{align}
    Expanding the numerator to first order gives
    \begin{align}
        \operatorname{Tr}(\widetilde{\rho}\,\widetilde{F}_x) &= \operatorname{Tr}(\rho F_x) +\operatorname{Tr}(\delta\rho F_x) +\operatorname{Tr}(\rho\delta F_x) \notag \\
        &\quad +\operatorname{Tr}(\rho\delta F_x^P) +O(\epsilon^2).
    \end{align}
    Here, $O(\epsilon^2)$ collectively denotes all terms of second and higher order in the perturbations.
    Similarly, the denominator becomes
    \begin{align}
        \operatorname{Tr}(\widetilde{\rho}\,\widetilde{F}) &= \operatorname{Tr}(\rho F) +\operatorname{Tr}(\delta\rho F) +\operatorname{Tr}(\rho\delta F) \notag \\
        &\quad +\operatorname{Tr}(\rho\delta F^P) +O(\epsilon^2).
    \end{align}
    Let
    \begin{align}
        N_x := \operatorname{Tr}(\rho F_x),\qquad D := \operatorname{Tr}(\rho F),
    \end{align}
    and define their first-order variations by
    \begin{align}
        \delta N_x &:= \operatorname{Tr}(\delta\rho F_x) +\operatorname{Tr}(\rho\delta F_x) +\operatorname{Tr}(\rho\delta F_x^P), \\
        \delta D &:= \operatorname{Tr}(\delta\rho F) +\operatorname{Tr}(\rho\delta F) +\operatorname{Tr}(\rho\delta F^P).
    \end{align}
    Then
    \begin{align}
        \widetilde{p}(x| y^\ast,\widetilde{\rho}) = \frac{N_x+\delta N_x}{D+\delta D}+O(\epsilon^2).
    \end{align}
    Using the first-order expansion of the ratio,
    \begin{align}
        \frac{N_x+\delta N_x}{D+\delta D} = \frac{N_x}{D} +\frac{1}{D} \left(\delta N_x-\frac{N_x}{D}\delta D\right) +O(\epsilon^2).
    \end{align}
    Therefore,
    \begin{align}
        \widetilde{p}(x| y^\ast,\widetilde{\rho}) &= p(x| y^\ast,\rho) +\frac{\delta N_x -p(x| y^\ast,\rho)\delta D}{\operatorname{Tr}(\rho F)} +O(\epsilon^2) \notag \\
        &= p(x| y^\ast,\rho) +\frac{\operatorname{Tr}\!\left[\delta\rho \left(F_x-p(x| y^\ast,\rho)F\right)\right]}{\operatorname{Tr}(\rho F)} \notag \\
        &\quad +\frac{\operatorname{Tr}\!\left[\rho \left(\delta F_x-p(x| y^\ast,\rho)\delta F\right)\right]}{\operatorname{Tr}(\rho F)} \notag \\
        &\quad +\frac{\operatorname{Tr}\!\left[\rho \left(\delta F_x^P-p(x| y^\ast,\rho)\delta F^P\right)\right]}{\operatorname{Tr}(\rho F)} +O(\epsilon^2).
    \end{align}
    Since $(\boldsymbol{M},P_{y^\ast})$ realizes information erasure, the ideal posterior distribution is
    independent of the input state and satisfies
    \begin{align}
        p(x| y^\ast,\rho) = q_x.
    \end{align}
    Equivalently, the information-erasure condition gives
    \begin{align}
        F_x = q_xF.
    \end{align}
    Therefore, the first-order contribution arising from the preparation perturbation vanishes:
    \begin{align}
        \operatorname{Tr}\!\left[\delta\rho \left(F_x-q_xF\right)\right] = 0.
    \end{align}
    Hence,
    \begin{align}
        \widetilde{p}(x| y^\ast,\widetilde{\rho}) &= q_x +\frac{\operatorname{Tr}\!\left[\rho \left(\delta F_x-q_x\delta F\right)\right]}{\operatorname{Tr}(\rho F)} \notag \\
        &\quad +\frac{\operatorname{Tr}\!\left[\rho \left(\delta F_x^P-q_x\delta F^P\right)\right]}{\operatorname{Tr}(\rho F)} +O(\epsilon^2).
    \end{align}
    Using
    \begin{align}
        R_x := \delta F_x-q_x\delta F,\qquad R_x^P := \delta F_x^P-q_x\delta F^P,
    \end{align}
    we obtain
    \begin{align}
        \widetilde{p}(x| y^\ast,\widetilde{\rho}) = q_x +\frac{\operatorname{Tr}\!\left[\rho(R_x+R_x^P)\right]}{\operatorname{Tr}(\rho F)} +O(\epsilon^2).
    \end{align}
    Finally, by the assumption of Theorem~\ref{thm:first_order_spam_separation},
    \begin{align}
        R_x^P = 0,\qquad \forall x\in X.
    \end{align}
    Therefore,
    \begin{align}
        \widetilde{p}(x| y^\ast,\widetilde{\rho}) = q_x +\frac{\operatorname{Tr}(\rho R_x)}{\operatorname{Tr}(\rho F)} +O(\epsilon^2).
    \end{align}

\end{proof}

\section{Qubit illustration of SPAM separation}
\label{app:minimal_spam_sim}

In this appendix, we give a single qubit example illustrating first-order SPAM separation.
All error parameters are of order $O(\epsilon)$ unless stated otherwise.

Let $X$, $Y$, and $Z$ denote the Pauli operators.
Consider the single-Kraus instrument defined by
\begin{align}
    M_0 = \left(\frac12 I-\frac14 X\right)^{1/2}, \qquad M_1 = M_0Z.
    \label{eq:kernel_qubit_instrument}
\end{align}
We choose the postselection effect $P_{y^\ast}=\frac12 I+\frac14 X$.
Then the pair $(\{M_0,M_1\},P_{y^\ast})$ realizes information erasure with $q_0=q_1=1/2$.
Hence, the posterior distribution is independent of the input state.

\subsection{Postselection-noise kernel}

For an arbitrary Hermitian perturbation $\delta P_{y^\ast}$, define
$H:=P_{y^\ast}^{-1/2}\delta P_{y^\ast}P_{y^\ast}^{-1/2}$.
Substituting the instrument into the postselection residual gives
\begin{align}
    R_0^P = \frac{3}{32}(H-ZHZ), \qquad R_1^P = -R_0^P.
    \label{eq:kernel_qubit_postselection_residual}
\end{align}
Hence, $R_x^P=0$ for both outcomes if and only if $[H,Z]=0$.
For a qubit, this is equivalent to $H=aI+bZ$ with $a,b\in\mathbb{R}$.
Therefore,
\begin{align}
    \mathcal{K}_P = \left\{P_{y^\ast}^{1/2}(aI+bZ)P_{y^\ast}^{1/2} \,\middle|\, a,b\in\mathbb{R}\right\}.
    \label{eq:kernel_qubit_noise_space}
\end{align}
Thus, the postselection-noise kernel is two-dimensional and contains, in addition to the uniform rescaling
direction, an independent non-scaling direction.

To illustrate this non-scaling direction, consider
\begin{align}
    \widetilde{P}_{y^\ast} = P_{y^\ast} + P_{y^\ast}^{1/2}(bZ)P_{y^\ast}^{1/2},
    \label{eq:kernel_qubit_random_effect}
\end{align}
for $b$ such that $\widetilde{P}_{y^\ast}$ remains a valid POVM effect.

The same direction also has a direct calibration interpretation.
Let $U_y(\vartheta_P)=e^{-i\vartheta_PY/2}$ and consider a small unitary rotation of the postselection effect,
\begin{align}
    \widetilde{P}_{y^\ast} &= U_y(\vartheta_P) P_{y^\ast} U_y(\vartheta_P)^\dagger \notag \\
    &= P_{y^\ast} -\frac14\vartheta_P Z + O(\vartheta_P^2).
\end{align}
Since $P_{y^\ast}^{1/2} Z P_{y^\ast}^{1/2} =(\sqrt{3}/4)Z$, we have
$\delta P_{y^\ast} = P_{y^\ast}^{1/2} [-(\vartheta_P/\sqrt{3})Z] P_{y^\ast}^{1/2}\in\mathcal{K}_P$.
Hence, a small unitary rotation about the $y$ axis is invisible to first order.
\subsection{First-order SPAM error separation}
We now introduce a coherent measurement error by mixing the two outcome branches of the ideal instrument.
Specifically, let
\begin{align}
    \widetilde{M}_0 &= \cos\epsilon_M\,M_0+\sin\epsilon_M\,M_1, \\
    \widetilde{M}_1 &= -\sin\epsilon_M\,M_0+\cos\epsilon_M\,M_1 .
\end{align}
This transformation preserves the completeness relation and therefore defines a valid single-Kraus instrument.
The ideal postselection effect $P_{y^\ast}$ is kept fixed.

For the initial state, let $|\psi_\varphi\rangle = \cos(\varphi/2)|0\rangle+\sin(\varphi/2)|1\rangle$.
We take the ideal state to be $\rho=\ketbra{\psi_\vartheta}$ and introduce a state-preparation error as
$\widetilde{\rho}=\ketbra{\psi_{\vartheta+\epsilon_S}}$.

Without postselection, the probabilities are
\begin{align}
    \widetilde{p}(x|\widetilde{\rho}) &= \frac12 + (-1)^{x+1} \left(\frac14\sin\vartheta + \frac14\cos\vartheta\,\epsilon_S - \cos\vartheta\,\epsilon_M\right) \notag \\
    &\quad+O(\epsilon^2).
    \label{eq:kernel_qubit_raw_expansion}
\end{align}
Thus, both the state-preparation error and the measurement error contribute to the probability distribution at
first order.

Applying Theorem~\ref{thm:first_order_spam_separation}, we obtain
\begin{align}
    \widetilde{p}(x|y^\ast,\widetilde{\rho}) = \frac12 + (-1)^x\cos\vartheta\,\epsilon_M +O(\epsilon^2).
    \label{eq:kernel_qubit_posterior_expansion}
\end{align}
Hence, the first-order contribution of the state-preparation error vanishes, while the measurement error
remains visible.

\section{Feedback Control}
\label{app:feedback_control}

In this section, we provide the mathematical proof needed for the main text.
The goal of this section is to establish Theorem~\ref{thm:feedback_guarantee}.
To this end, we first establish the direct-sum structure of the visible error space that underlies the
feedback decomposition.

\begin{lem}[Decomposition of the visible error space]
    \label{lem:visible_error_decomposition}
    For the noncontrollable visible subspace $V_{\mathrm{nc}}$ defined in the main text,
    $V_{\mathrm{vis}}=V_{\mathrm{fb}}\oplus V_{\mathrm{nc}}$.
    Consequently, since $V=\mathcal{K}\oplus^\perp V_{\mathrm{vis}}$,
    $V=\mathcal{K}\oplus^\perp\left(V_{\mathrm{fb}}\oplus V_{\mathrm{nc}}\right)$.
\end{lem}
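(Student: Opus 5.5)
The natural device is the bilinear form $\langle v,w\rangle_T:=\langle T(v),T(w)\rangle_W$ on $V_{\mathrm{vis}}$. Because $V_{\mathrm{vis}}=\mathcal{K}^\perp$ with $\mathcal{K}=\ker T$, the restriction $T|_{V_{\mathrm{vis}}}$ is injective. Hence $\langle\cdot,\cdot\rangle_T$ is a genuine (positive-definite) real inner product on the finite-dimensional space $V_{\mathrm{vis}}$. By Eq.~\eqref{eq:Vnc_definition}, $V_{\mathrm{nc}}$ is exactly the orthogonal complement of $V_{\mathrm{fb}}$ inside $V_{\mathrm{vis}}$ with respect to this form. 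The lemma then reduces to the standard fact that a subspace and its orthogonal complement give a direct sum decomposition of the whole space. Note that this decomposition is orthogonal for $\langle\cdot,\cdot\rangle_T$ but not necessarily for $\langle\cdot,\cdot\rangle_V$, which is why the statement writes $\oplus$ rather than $\oplus^\perp$.

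The steps are as follows.

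\textbf{Step 1 (well-definedness).} $V$ is a finite-dimensional real inner-product space, being a real-linear subspace of $\bigoplus_x\mathcal{L}(\mathcal{H})$ cut out by the real-linear constraint Eq.~\eqref{eq:first_order_completeness}. Since $T$ is real-linear, $\mathcal{K}=\ker T$ is a subspace, and $V=\mathcal{K}\oplus^\perp\mathcal{K}^\perp=\mathcal{K}\oplus^\perp V_{\mathrm{vis}}$ by the projection theorem in finite dimensions.

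\textbf{Step 2 (positivity).} $\langle\cdot,\cdot\rangle_T$ is symmetric and bilinear because $\langle\cdot,\cdot\rangle_W$ is. For $v\in V_{\mathrm{vis}}$, if $\langle v,v\rangle_T=\|T(v)\|_W^2=0$, then $v\in\mathcal{K}\cap\mathcal{K}^\perp=\{0\}$. So $\langle\cdot,\cdot\rangle_T$ is an inner product on $V_{\mathrm{vis}}$.

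\textbf{Step 3 (trivial intersection).} If $v\in V_{\mathrm{fb}}\cap V_{\mathrm{nc}}$, then taking $w=v\in V_{\mathrm{fb}}$ in the definition of $V_{\mathrm{nc}}$ gives $\|T(v)\|_W^2=0$, so $v=0$ by Step 2.

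\textbf{Step 4 (spanning).} Choose a $\langle\cdot,\cdot\rangle_T$-orthonormal basis $\{e_k\}$ of $V_{\mathrm{fb}}$ by Gram--Schmidt. For any $v\in V_{\mathrm{vis}}$, set $v_{\mathrm{fb}}:=\sum_k\langle e_k,v\rangle_T e_k$ and $v_{\mathrm{nc}}:=v-v_{\mathrm{fb}}$. Then $v_{\mathrm{nc}}\in V_{\mathrm{vis}}$, since $V_{\mathrm{fb}}\subseteq V_{\mathrm{vis}}$ by assumption, and $\langle e_k,v_{\mathrm{nc}}\rangle_T=0$ for every $k$. By linearity this gives $\langle T(v_{\mathrm{nc}}),T(w)\rangle_W=0$ for all $w\in V_{\mathrm{fb}}$, so $v_{\mathrm{nc}}\in V_{\mathrm{nc}}$. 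Alternatively, one can argue by dimension: $V_{\mathrm{nc}}$ is the kernel of the linear map $v\mapsto(\langle e_k,v\rangle_T)_k$, which has rank $\dim V_{\mathrm{fb}}$. Combined with Step 3, this gives $\dim V_{\mathrm{nc}}=\dim V_{\mathrm{vis}}-\dim V_{\mathrm{fb}}$.

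\textbf{Step 5 (combination).} Substituting $V_{\mathrm{vis}}=V_{\mathrm{fb}}\oplus V_{\mathrm{nc}}$ into Step 1 gives $V=\mathcal{K}\oplus^\perp(V_{\mathrm{fb}}\oplus V_{\mathrm{nc}})$.

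The only point requiring care is Step 2, the positive-definiteness of $\langle\cdot,\cdot\rangle_T$ on $V_{\mathrm{vis}}$. It is what makes both the trivial intersection and the projection in Step 4 work, and it hinges entirely on restricting to $\mathcal{K}^\perp$. Without that restriction, $V_{\mathrm{nc}}$ would contain all of $\mathcal{K}$ and the sum would fail to be direct.
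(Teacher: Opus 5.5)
Your proof is correct and is essentially the paper's argument. The trivial-intersection step is identical: take $w=v$ to get $\|T(v)\|_W^2=0$, then use $\mathcal{K}\cap\mathcal{K}^\perp=\{0\}$. Your Gram--Schmidt projection with respect to $\langle\cdot,\cdot\rangle_T$ gives the same decomposition the paper obtains by orthogonally projecting $T(w)$ onto $T(V_{\mathrm{fb}})$ in $W$ and pulling back via a preimage $u\in V_{\mathrm{fb}}$. Recasting $V_{\mathrm{nc}}$ as the $\langle\cdot,\cdot\rangle_T$-orthogonal complement of $V_{\mathrm{fb}}$ inside $V_{\mathrm{vis}}$ packages this neatly and makes clear why the sum is orthogonal for the induced form but not for $\langle\cdot,\cdot\rangle_V$.
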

Note that the above direct sum is not necessarily orthogonal with respect to the inner product on $V$.
\begin{proof}
    We first verify that $V_{\mathrm{nc}}$ is a subspace of $V_{\mathrm{vis}}$.
    By definition,
    \begin{align}
        V_{\mathrm{nc}} = \{v\in V_{\mathrm{vis}}| \langle T(v),T(w)\rangle_W = 0,\ \forall w\in V_{\mathrm{fb}}\}.
        \label{eq:app_direct_1}
    \end{align}
    Take $v_1,v_2\in V_{\mathrm{nc}}$ and $a,b\in\mathbb{R}$.
    For any $w\in V_{\mathrm{fb}}$, the linearity of $T$ gives
    \begin{align}
        & \langle T(av_1+bv_2),T(w)\rangle_W = a\langle T(v_1),T(w)\rangle_W \notag \\
        &\quad+b\langle T(v_2),T(w)\rangle_W = 0.
        \label{eq:app_direct_2}
    \end{align}
    Since $V_{\mathrm{vis}}$ is a subspace, $av_1+bv_2\in V_{\mathrm{vis}}$.
    Therefore, $av_1+bv_2\in V_{\mathrm{nc}}$, and $V_{\mathrm{nc}}$ is a subspace of $V_{\mathrm{vis}}$.
    In particular, it contains the zero vector.

    We next prove
    \begin{align}
        V_{\mathrm{vis}} = V_{\mathrm{fb}}+V_{\mathrm{nc}}.
        \label{eq:app_direct_sum_first}
    \end{align}
    Since $V_{\mathrm{fb}}\subseteq V_{\mathrm{vis}}$ by assumption and
    $V_{\mathrm{nc}}\subseteq V_{\mathrm{vis}}$ by definition,
    \begin{align}
        V_{\mathrm{fb}}+V_{\mathrm{nc}}\subseteq V_{\mathrm{vis}}.
        \label{eq:app_direct_subset_1}
    \end{align}
    It therefore suffices to show the reverse inclusion
    \begin{align}
        V_{\mathrm{vis}}\subseteq V_{\mathrm{fb}}+V_{\mathrm{nc}}.
        \label{eq:app_direct_subset_2}
    \end{align}

    Take an arbitrary $w\in V_{\mathrm{vis}}$ and define
    \begin{align}
        Y_{\mathrm{fb}} := T(V_{\mathrm{fb}})\subseteq W.
        \label{eq:app_Yfb}
    \end{align}
    Since $Y_{\mathrm{fb}}$ is a subspace of $W$, the standard orthogonal decomposition of a
    finite-dimensional inner-product space gives
    \begin{align}
        W = Y_{\mathrm{fb}}\oplus Y_{\mathrm{fb}}^\perp.
        \label{eq:app_W_decomposition}
    \end{align}
    Hence, $T(w)\in W$ admits the unique decomposition
    \begin{align}
        T(w) = \alpha+\beta,\qquad \alpha\in Y_{\mathrm{fb}},\qquad \beta\in Y_{\mathrm{fb}}^\perp.
        \label{eq:app_Tw_decomposition}
    \end{align}
    Since $\alpha\in Y_{\mathrm{fb}}=T(V_{\mathrm{fb}})$, there exists $u\in V_{\mathrm{fb}}$ such that
    \begin{align}
        \alpha = T(u).
        \label{eq:app_alpha_Tu}
    \end{align}
    Define
    \begin{align}
        v := w-u.
        \label{eq:app_v_definition}
    \end{align}
    Since $w\in V_{\mathrm{vis}}$ and $u\in V_{\mathrm{fb}}\subseteq V_{\mathrm{vis}}$, the fact that
    $V_{\mathrm{vis}}$ is a subspace gives
    \begin{align}
        v = w-u\in V_{\mathrm{vis}}.
        \label{eq:app_v_in_Vvis}
    \end{align}
    Furthermore,
    \begin{align}
        T(v) = T(w-u) = T(w)-T(u) = (\alpha+\beta)-\alpha = \beta.
        \label{eq:app_Tv_beta}
    \end{align}
    For arbitrary $u'\in V_{\mathrm{fb}}$, $T(u')\in Y_{\mathrm{fb}}$, while $\beta\in Y_{\mathrm{fb}}^\perp$.
    Therefore,
    \begin{align}
        \langle T(v),T(u')\rangle_W = \langle\beta,T(u')\rangle_W = 0.
        \label{eq:app_v_orthogonal}
    \end{align}
    Thus, $v\in V_{\mathrm{nc}}$.
    Consequently,
    \begin{align}
        w = u+v,\qquad u\in V_{\mathrm{fb}},\qquad v\in V_{\mathrm{nc}}.
        \label{eq:app_w_decomposition}
    \end{align}
    Since $w\in V_{\mathrm{vis}}$ was arbitrary,
    \begin{align}
        V_{\mathrm{vis}}\subseteq V_{\mathrm{fb}}+V_{\mathrm{nc}},
    \end{align}
    and hence
    \begin{align}
        V_{\mathrm{vis}} = V_{\mathrm{fb}}+V_{\mathrm{nc}}.
        \label{eq:app_sum_equal}
    \end{align}

    Finally, we prove
    \begin{align}
        V_{\mathrm{fb}}\cap V_{\mathrm{nc}} = \{0\}.
        \label{eq:app_intersection_zero}
    \end{align}
    Let $v\in V_{\mathrm{fb}}\cap V_{\mathrm{nc}}$.
    Since $v\in V_{\mathrm{nc}}$, for every $u\in V_{\mathrm{fb}}$,
    \begin{align}
        \langle T(v),T(u)\rangle_W = 0.
    \end{align}
    Since $v\in V_{\mathrm{fb}}$, taking $u=v$ gives
    \begin{align}
        \langle T(v),T(v)\rangle_W = \|T(v)\|_W^2 = 0.
    \end{align}
    Therefore,
    \begin{align}
        T(v) = 0,
    \end{align}
    and hence
    \begin{align}
        v\in\ker T = \mathcal{K}.
    \end{align}
    On the other hand, $v\in V_{\mathrm{fb}}\subseteq V_{\mathrm{vis}}$, so
    \begin{align}
        v\in\mathcal{K}\cap V_{\mathrm{vis}}.
    \end{align}
    Since $V_{\mathrm{vis}}=\mathcal{K}^\perp$,
    \begin{align}
        \mathcal{K}\cap V_{\mathrm{vis}} = \mathcal{K}\cap\mathcal{K}^\perp = \{0\}.
    \end{align}
    Thus, $v=0$, and therefore
    \begin{align}
        V_{\mathrm{fb}}\cap V_{\mathrm{nc}} = \{0\}.
    \end{align}
    Combining the above results gives
    \begin{align}
        V_{\mathrm{vis}} = V_{\mathrm{fb}}\oplus V_{\mathrm{nc}}.
    \end{align}
\end{proof}

As an immediate consequence of Lemma~\ref{lem:visible_error_decomposition}, the effective first-order
perturbation admits the following unique decomposition.

\begin{cor}
    \label{cor:app_feedback_decomposition}
    For every $\eta$, there exist unique $\delta M_K\in K$, $\delta M_{\mathrm{fb}}(\eta)\in V_{\mathrm{fb}}$,
    and $\delta M_{\mathrm{nc}}\in V_{\mathrm{nc}}$ such that
    $\delta M(\eta)=\delta M_K+\delta M_{\mathrm{fb}}(\eta)+\delta M_{\mathrm{nc}}$.
\end{cor}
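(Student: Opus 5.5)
The corollary is the vector-level restatement of Lemma~\ref{lem:visible_error_decomposition}, applied to the single element $\delta M(\eta)\in V$. (Here $K$ denotes the kernel $\mathcal{K}=\ker T$.) The plan is therefore to prove existence and uniqueness from the decomposition
\begin{align}
    V=\mathcal{K}\oplus^\perp\left(V_{\mathrm{fb}}\oplus V_{\mathrm{nc}}\right).
\end{align}

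\emph{Existence.} Fix $\eta$. Since $\delta M(\eta)\in V$ by Eq.~\eqref{eq:effective_feedback_perturbation}, first project orthogonally onto $\mathcal{K}$ in the finite-dimensional real inner-product space $V$. This gives
\begin{align}
    \delta M(\eta)=\delta M_K+w,\qquad \delta M_K\in\mathcal{K},\quad w\in\mathcal{K}^\perp=V_{\mathrm{vis}}.
\end{align}
Next apply the construction in the proof of Lemma~\ref{lem:visible_error_decomposition} to $w$. Decompose $T(w)$ into its component in $Y_{\mathrm{fb}}=T(V_{\mathrm{fb}})$ and its component in $Y_{\mathrm{fb}}^\perp$. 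Choose a preimage $u\in V_{\mathrm{fb}}$ of the first component, and set $\delta M_{\mathrm{fb}}(\eta):=u$ and $\delta M_{\mathrm{nc}}:=w-u\in V_{\mathrm{nc}}$.

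\emph{Uniqueness.} Suppose $\delta M(\eta)$ has two such decompositions and subtract them. This gives $k+f+n=0$ with $k\in\mathcal{K}$, $f\in V_{\mathrm{fb}}$, and $n\in V_{\mathrm{nc}}$. Since $f+n\in V_{\mathrm{vis}}=\mathcal{K}^\perp$, we have $k=-(f+n)\in\mathcal{K}\cap\mathcal{K}^\perp=\{0\}$. Then $f=-n\in V_{\mathrm{fb}}\cap V_{\mathrm{nc}}=\{0\}$ by the lemma. Hence all three components are uniquely determined.

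One subtlety needs checking. The choice of the preimage $u$ in the existence step is not ambiguous, because $T$ restricted to $V_{\mathrm{fb}}\subseteq\mathcal{K}^\perp$ is injective. Indeed, if $T(u)=T(u')$ for $u,u'\in V_{\mathrm{fb}}$, then $u-u'\in\mathcal{K}\cap\mathcal{K}^\perp=\{0\}$.

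The statement writes $\delta M_K$ and $\delta M_{\mathrm{nc}}$ without $\eta$-dependence. Establishing that independence would need the hypothesis of Theorem~\ref{thm:feedback_guarantee} that the feedback changes only the $V_{\mathrm{fb}}$ component. As a pure existence and uniqueness claim for each fixed $\eta$, however, the argument above suffices, and no step is a real obstacle.
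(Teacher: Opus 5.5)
Your proposal is correct and follows essentially the same route as the paper, which simply cites the direct-sum decomposition $V=\mathcal{K}\oplus^\perp\left(V_{\mathrm{fb}}\oplus V_{\mathrm{nc}}\right)$ of Lemma~\ref{lem:visible_error_decomposition}; you spell out the existence and uniqueness steps that the paper leaves implicit. Your closing remark also matches the paper, which gets the $\eta$-independence of the $\mathcal{K}$ and $V_{\mathrm{nc}}$ components from the assumption that the feedback acts only within $V_{\mathrm{fb}}$; with your uniqueness step, this follows at once by comparing with the decomposition at $\eta=0$.
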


It is important to emphasize that $\eta$ parametrizes only the controllable directions accessible through the
feedback.
Accordingly, we assume that the first-order effect of the feedback belongs to $V_{\mathrm{fb}}$.
Under this assumption, varying $\eta$ changes only the $V_{\mathrm{fb}}$ component of $\delta M(\eta)$, while
the $K$ and $V_{\mathrm{nc}}$ components remain unchanged.

\begin{proof}
    This follows immediately from Lemma~\ref{lem:visible_error_decomposition} and the assumption that the
    feedback acts only within $V_{\mathrm{fb}}$.
\end{proof}

With this decomposition in hand, we are now ready to prove the feedback guarantee theorem.
\subsection{Proof of Theorem~\ref{thm:feedback_guarantee}}
\label{app:proof_sufficient_decrease}

\begin{proof}
    We first show that $\alpha$ and $\beta$ exist and satisfy
    \begin{align}
        0<\alpha\leq\beta<\infty.
        \label{eq:app_alpha_beta_bounds}
    \end{align}
    Recall that
    \begin{align}
        \alpha := \min_{v\in V_{\mathrm{fb}}\setminus\{0\}}\frac{\|T(v)\|_W}{\|v\|_V},\qquad \beta := \max_{v\in V_{\mathrm{fb}}\setminus\{0\}}\frac{\|T(v)\|_W}{\|v\|_V}.
    \end{align}

    Let $\mathcal{K}:=\ker T$.
    Since
    \begin{align}
        V_{\mathrm{vis}} = \mathcal{K}^\perp,\qquad V_{\mathrm{fb}}\subseteq V_{\mathrm{vis}},
    \end{align}
    if $v\in V_{\mathrm{fb}}$ and $T(v)=0$, then $v\in\mathcal{K}$ and $v\in\mathcal{K}^\perp$.
    Hence
    \begin{align}
        v\in\mathcal{K}\cap\mathcal{K}^\perp = \{0\}.
    \end{align}
    Therefore, the restriction
    \begin{align}
        T|_{V_{\mathrm{fb}}}:V_{\mathrm{fb}}\longrightarrow W
        \label{eq:app_T_restriction}
    \end{align}
    is injective.

    Define
    \begin{align}
        S_{\mathrm{fb}} := \{v\in V_{\mathrm{fb}}| \|v\|_V = 1\}.
        \label{eq:app_Sfb}
    \end{align}
    Since $V_{\mathrm{fb}}\neq\{0\}$, $S_{\mathrm{fb}}$ is nonempty, closed, and bounded.
    Because $V_{\mathrm{fb}}$ is finite dimensional, $S_{\mathrm{fb}}$ is compact.
    The map
    \begin{align}
        v\longmapsto\|T(v)\|_W
    \end{align}
    is continuous, and therefore, by the extreme value theorem, it attains both a minimum and a maximum on
    $S_{\mathrm{fb}}$.
    Hence
    \begin{align}
        \alpha = \min_{v\in S_{\mathrm{fb}}}\|T(v)\|_W,\qquad \beta = \max_{v\in S_{\mathrm{fb}}}\|T(v)\|_W.
    \end{align}
    Since $T|_{V_{\mathrm{fb}}}$ is injective, $T(v)\neq0$ for every $v\in S_{\mathrm{fb}}$.
    Therefore,
    \begin{align}
        \|T(v)\|_W>0
    \end{align}
    for all $v\in S_{\mathrm{fb}}$, and in particular
    \begin{align}
        \alpha>0.
    \end{align}
    Since the maximum exists,
    \begin{align}
        \beta<\infty.
    \end{align}
    Moreover, $S_{\mathrm{fb}}\neq\varnothing$ and $\|T(v)\|_W>0$, so
    \begin{align}
        \beta>0.
    \end{align}
    Thus,
    \begin{align}
        0<\alpha\leq\beta<\infty.
    \end{align}

    By the definitions of $\alpha$ and $\beta$, for any $v\in V_{\mathrm{fb}}$, taking
    $v/\|v\|_V\in S_{\mathrm{fb}}$ when $v\neq0$ yields
    \begin{align}
        \alpha\|v\|_V\leq\|T(v)\|_W\leq\beta\|v\|_V.
        \label{eq:app_norm_bounds}
    \end{align}

    We next examine the decomposition of the cost function.
    By the direct-sum decomposition, for every $\eta$,
    \begin{align}
        \delta M(\eta) = \delta M_{\mathcal{K}}+\delta M_{\mathrm{fb}}(\eta)+\delta M_{\mathrm{nc}}
    \end{align}
    uniquely.
    Since $\delta M_{\mathcal{K}}\in\mathcal{K}=\ker T$,
    \begin{align}
        T(\delta M_{\mathcal{K}}) = 0.
    \end{align}
    Moreover, $\delta M_{\mathrm{fb}}(\eta)\in V_{\mathrm{fb}}$ and
    $\delta M_{\mathrm{nc}}\in V_{\mathrm{nc}}$, and therefore, by the definition of $V_{\mathrm{nc}}$,
    \begin{align}
        \langle T(\delta M_{\mathrm{nc}}),T(\delta M_{\mathrm{fb}}(\eta))\rangle_W = 0.
    \end{align}
    Consequently,
    \begin{align}
        C(\eta) &= \|T(\delta M(\eta))\|_W^2 = \|T(\delta M_{\mathrm{nc}})+T(\delta M_{\mathrm{fb}}(\eta))\|_W^2 \notag \\
        &= \|T(\delta M_{\mathrm{nc}})\|_W^2+\|T(\delta M_{\mathrm{fb}}(\eta))\|_W^2.
        \label{eq:app_cost_full_decomposition}
    \end{align}
    Define
    \begin{align}
        C^\ast := \|T(\delta M_{\mathrm{nc}})\|_W^2,\quad C'(\eta) := \|T(\delta M_{\mathrm{fb}}(\eta))\|_W^2.
    \end{align}
    Then
    \begin{align}
        C(\eta) = C^\ast+C'(\eta).
        \label{eq:app_C_Cstar_Cprime}
    \end{align}

    Now define
    \begin{align}
        \gamma := \left(\frac{\alpha}{\beta}\right)^2.
    \end{align}
    Since $0<\alpha\leq\beta<\infty$,
    \begin{align}
        0<\gamma\le1.
    \end{align}
    Suppose that
    \begin{align}
        C(\eta_{\mathrm{final}})<\gamma C(0).
    \end{align}
    Then
    \begin{align}
        & C'(\eta_{\mathrm{final}})<\gamma\left(C'(0)+C^\ast\right)-C^\ast \notag \\
        &= \gamma C'(0)+(\gamma-1)C^\ast\leq \gamma C'(0),
    \end{align}
    and hence
    \begin{align}
        C'(\eta_{\mathrm{final}})<\gamma C'(0).
        \label{eq:app_Cprime_condition}
    \end{align}

    Using Eq.~\eqref{eq:app_norm_bounds},
    \begin{align}
        & \alpha\|\delta M_{\mathrm{fb}}(\eta_{\mathrm{final}})\|_V\leq\|T(\delta M_{\mathrm{fb}}(\eta_{\mathrm{final}}))\|_W \notag \\
        &= \sqrt{C'(\eta_{\mathrm{final}})}<\sqrt{\gamma}\sqrt{C'(0)}.
    \end{align}
    Similarly,
    \begin{align}
        \sqrt{C'(0)} = \|T(\delta M_{\mathrm{fb}}(0))\|_W\leq\beta\|\delta M_{\mathrm{fb}}(0)\|_V.
    \end{align}
    Therefore,
    \begin{align}
        \alpha\|\delta M_{\mathrm{fb}}(\eta_{\mathrm{final}})\|_V<\sqrt{\gamma}\,\beta\|\delta M_{\mathrm{fb}}(0)\|_V = \alpha\|\delta M_{\mathrm{fb}}(0)\|_V.
    \end{align}
    Since $\alpha>0$,
    \begin{align}
        \|\delta M_{\mathrm{fb}}(\eta_{\mathrm{final}})\|_V<\|\delta M_{\mathrm{fb}}(0)\|_V.
    \end{align}
    Since $\mathcal{K}\perp V_{\mathrm{vis}}$, and either $V_{\mathrm{nc}}=0$ or
    $V_{\mathrm{nc}}\perp V_{\mathrm{fb}}$, the error decomposition gives
    \begin{align}
        \|\delta M(\eta)\|_V^2 = \|\delta M_{\mathcal{K}}\|_V^2 + \|\delta M_{\mathrm{fb}}(\eta)\|_V^2 + \|\delta M_{\mathrm{nc}}\|_V^2.
    \end{align}
    Because the $\mathcal{K}$ and $V_{\mathrm{nc}}$ components are independent of $\eta$, it follows that
    \begin{align}
        \|\delta M(\eta_{\mathrm{final}})\|_V < \|\delta M(0)\|_V.
    \end{align}

    Finally, by the first-order expansion of the implemented measurement,
    \begin{align}
        \mathcal{N}\circ\mathcal{F}_{\eta}(\boldsymbol{M}) = \boldsymbol{M}+\delta M(\eta)+O(2),
    \end{align}
    and $\mathcal{F}_0=\mathrm{id}$.
    Therefore,
    \begin{align}
        \left\| \mathcal{N}\circ\mathcal{F}_{\eta_{\mathrm{final}}}(\boldsymbol{M}) - \boldsymbol{M}\right\|_V < \left\| \mathcal{N}(\boldsymbol{M}) - \boldsymbol{M}\right\|_V +O(2).
    \end{align}
\end{proof}

\subsection{Proof of Corollary~\ref{thm:uniform_measurement_improvement}}
\label{app:uniform_measurement_improvement}
\begin{proof}
    For an operator $B\in\mathcal{L}(\mathcal{H})$, we define the Schatten $p$-norm by
    \begin{align}
        \|B\|_p := \left[\operatorname{Tr} \left((B^\dagger B)^{p/2}\right)\right]^{1/p}, \qquad 1\leq p<\infty,
    \end{align}
    and the operator norm by
    \begin{align}
        \|B\|_\infty := \sqrt{\lambda_{\max}(B^\dagger B)}.
    \end{align}
    In particular, $\|\cdot\|_1$ is the trace norm and $\|\cdot\|_2=\|\cdot\|_{\mathrm{HS}}$ is the
    Hilbert--Schmidt norm.

    Consider arbitrary measurement processes $\mathcal{A}=\{A_x\}_{x\in X}$ and
    $\mathcal{M}=\{M_x\}_{x\in X}$, together with an arbitrary $\rho\in S(\mathcal{H})$.
    Since $\mathcal{A}$ and $\mathcal{M}$ are measurement processes, $\|A_x\|_{\infty}\le1$ and
    $\|M_x\|_{\infty}\le1$.

    For each $x\in X$ and every $\rho\in S(\mathcal{H})$,

    \begin{align}
        & \left|p_{\mathcal{A}}(x|\rho)-p_{\mathcal{M}}(x|\rho)\right| \notag \\
        &= \left| \operatorname{Tr} \left[\rho \left(A_x^\dagger A_x-M_x^\dagger M_x\right)\right]\right| \notag \\
        & \leq \left\| \rho \left(A_x^\dagger A_x-M_x^\dagger M_x\right)\right\|_1 \notag \\
        & \leq \|\rho\|_1 \left\| A_x^\dagger A_x-M_x^\dagger M_x\right\|_\infty \notag \\
        &= \left\| A_x^\dagger A_x-M_x^\dagger M_x\right\|_\infty \notag \\
        &= \left\| (A_x-M_x)^\dagger A_x + M_x^\dagger(A_x-M_x)\right\|_{\infty} \notag \\
        & \leq \|(A_x-M_x)^\dagger A_x\|_\infty + \|M_x^\dagger(A_x-M_x)\|_\infty \notag \\
        & \leq \|A_x-M_x\|_\infty\|A_x\|_\infty + \|M_x\|_\infty\|A_x-M_x\|_\infty \notag \\
        &= \left(\|A_x\|_{\infty} + \|M_x\|_{\infty}\right) \|A_x-M_x\|_{\infty} \notag \\
        & \leq 2\|A_x-M_x\|_\infty \notag \\
        & \leq 2\|A_x-M_x\|_2 = 2\|A_x-M_x\|_{\mathrm{HS}}.
        \label{eq:branch_probability_bound}
    \end{align}
    Here we used H\"older's inequality, the triangle inequality, and $\|A\|_\infty\leq\|A\|_2\leq\|A\|_1$.

    Therefore, by the Cauchy--Schwarz inequality,
    \begin{align}
        \frac12 \sum_{x\in X} \left| p_{\mathcal{A}}(x|\rho) - p_{\mathcal{M}}(x|\rho)\right| & \leq \sum_{x\in X} \|A_x-M_x\|_{\mathrm{HS}} \notag \\
        & \leq \sqrt{|X|} \left(\sum_{x\in X} \|A_x-M_x\|_{\mathrm{HS}}^2\right)^{1/2} \notag \\
        &= \sqrt{|X|} \|\mathcal{A}-\mathcal{M}\|_{V}.
        \label{eq:tv_measurement_operator_bound}
    \end{align}

    Since this inequality holds for every $\rho\in S(\mathcal{H})$, it remains valid after taking the supremum
    over $\rho$.
    Substituting $\mathcal{A} = \mathcal{N}\circ \mathcal{F}_{\eta_{\mathrm{final}}}(\mathcal{M})$ and using
    the conclusion of Theorem~\ref{thm:feedback_guarantee},
    \begin{align}
        \left\| \mathcal{N}\circ \mathcal{F}_{\eta_{\mathrm{final}}}(\mathcal{M}) - \mathcal{M}\right\|_{V} < \left\| \mathcal{N}(\mathcal{M}) - \mathcal{M}\right\|_{V} +O(2),
    \end{align}
\end{proof}

\end{document}